\documentclass[11pt,a4paper]{amsart}
\usepackage[utf8]{inputenc}
\usepackage{amsmath}
\usepackage{amsfonts}
\usepackage{amssymb}
\usepackage{graphicx}
\usepackage{booktabs}
\usepackage[left=2.2cm,right=2.2cm,top=2cm,bottom=2cm]{geometry}
\usepackage{xcolor}
\usepackage[ruled,vlined]{algorithm2e}
\usepackage{changes}
\definechangesauthor[name=Julian, color=blue]{j}
\definechangesauthor[name=Ariel, color=orange]{a}
\numberwithin{equation}{section}  
\newtheorem{defn}{Definition}[section]
\newtheorem{exa}[defn]{Example}
\newtheorem{rem}[defn]{Remark}

\newtheorem{thm}[defn]{Theorem}

\newtheorem{asu}[defn]{Assumption}

\newcommand{\D}{\,{\mathop{}\!\mathrm{d}}} 
\newcommand{\R}{\mathbb{R}}
\newcommand{\Q}{\mathbb{Q}}

\newcommand{\K}{\mathbb{K}}
\newcommand{\N}{\mathbb{N}}

\newcommand{\E}{\mathbb{E}}

\newcommand{\one}{ 1 \hspace{-3pt} \mathrm{l}} %

\usepackage{hyperref}
\usepackage{xcolor}
\hypersetup{
    colorlinks,
    linkcolor={red!50!black},
    citecolor={blue!50!black},
    urlcolor={blue!80!black}
}

\title{Model-free price bounds under dynamic option trading}
\author{Ariel Neufeld \and Julian Sester}
\address{NTU Singapore, Division of Mathematical Sciences; 21 Nanyang Link, Singapore 637371}

\begin{document}

\begin{abstract}
In this paper we extend discrete time semi-static trading strategies by also allowing for dynamic trading in a finite amount of options, and we study the consequences for the model-independent super-replication prices of exotic derivatives. These include duality results as well as a precise characterization of pricing rules for the dynamically tradable options triggering an improvement of the price bounds for exotic derivatives in comparison with the conventional price bounds obtained through the martingale optimal transport approach.\\

\textbf{Keywords:}{ Martingale optimal transport, European options, Price bounds, Sensitivity}
\end{abstract}

\maketitle

\section{Introduction}
In practice it is a well-known and often faced problem that given a specific market model, super-hedging strategies for financial derivatives are very expensive to implement (compare \cite{biagini2004super}, \cite[Example 7.21]{follmer2016stochastic}, \cite{frey1999bounds} and \cite{neufeld2018buy} for examples in absence of transaction costs. Further we refer to \cite{cvitanic1999closed} and \cite{levental1997possibility} for examples in the presence of transaction costs). If an investor is interested in considering hedging strategies that super-replicate the payoff of a derivative under parameter-uncertainty of a specific model class or even completely model-independent, then prices for super-hedges become even higher than model-specific hedges, compare for this the model-independent approaches from \cite{beiglbock2013model}, \cite{dolinsky2014martingale} as well as the robust approach from \cite{bouchard2015arbitrage}. It has therefore become an agile recent research topic to reduce model-independent super-hedging prices through the introduction of different properties of models inferred from financial markets or via the inclusion of additional information (compare e.g. \cite{ansari2020improved}, \cite{aquino2019bounds}, \cite{eckstein2020martingale}, \cite{lutkebohmert2019tightening}, and \cite{sester2020robust}).

We contribute to the literature on model-free pricing by studying a novel approach to reduce prices of model-independent super-replication strategies. The approach relies on extending the usually considered class of semi-static trading strategies by additionally allowing dynamic trading in a finite amount of liquid European options. As we will show, this larger class of trading strategies may then allow to realize more flexible payoffs and thus to reduce super-hedging prices in many situations.

In an $n$-period financial market model, the assumption of being able to trade in European call options with expiration date $t_j$ (and other liquid kind of options) not only at initial time $t_0$, but also at intermediate times $(t_i)_{i=1,\dots,j-1}$ can be motivated by an observation that can be made on many real financial markets. Usually, expiration dates of European call options are bound to specific dates\footnote{For many call options this is the last trading day before the 20th of a month.}, i.e., if options with maturity $t_j$ are assumed to be liquidly traded at time $t_0$, then there are also quotes available for the same maturity (and a shorter time-to-maturity) at time $t_i>t_0$ with $t_i <t_j$, and the options can also be assumed to be tradeable at the later date. Thus, from a practical point of view, it seems natural to consider strategies incorporating dynamic trading in the underlying security as well as in European options.

It is one of the fundamental ideas in mathematical finance that minimizing prices over specific classes of super-replication strategies yields in many situations the same value as the maximal (risk-neutral) expectation of the payoff that is super-replicated, where the expectation is taken w.r.t.\ (martingale) measures from a specific model-class which is strongly related to the class of admissible super-hedges. The precise mathematical formulation of this result is known as super-hedging duality and it can be derived in different settings, compare \cite{acciaio2016model}, \cite{aksamit2019robust}, \cite{bayraktar2017arbitrage},  \cite{beiglbock2013model}, \cite{beiglbock2017complete}, \cite{bouchard2015arbitrage}, \cite{BurzoniFritelliMaggisAAP}, \cite{cheridito2020martingale}, \cite{cheridito2017duality}, and \cite{dolinsky2014martingale}, to name but a few. In model-independent approaches, the dual model-class consists of all martingale models (with undefined dynamics of the underlying stochastic process) that are consistent with prices of call options, whereas the trading strategies that super-replicate a payoff pointwise are semi-static.

In this sense, the model-free duality result from \cite{beiglbock2013model} reveals that the dynamic trading position in the underlying security can be considered as a natural counterpart of the martingale property of measures, whereas the static positions in European option corresponds to information on marginal distributions. We will describe the dual counterpart of a dynamic trading position in European options through a martingale property for the prices of these options, i.e., the model-class that is considered for the maximization of the payoff consists of call option-calibrated martingale measures under which the prices of the traded European options are also martingales. 

We study extensively the consequences of the modified model-independent setting for upper bounds of prices for exotic derivatives that emerge as minimal prices among super-replication strategies involving European options and simultaneously as maximal prices over the above described class of financial models.

The remainder of the paper is as follows. Section~\ref{sec_setup_main_results} introduces the setting and provides the main results.  Section~\ref{sec_exa_num} provides several numerical examples. Section~\ref{sec_proofs} contains all the mathematical proofs. Moreover, in Appendix~\ref{sec_extension} we provide extensions of the presented approach to frictions, multiple securities and other dynamically traded options.
\section{Setup and main Results}\label{sec_setup_main_results}
\subsection{Setup}
We consider at $t_0 = 0$ a frictionless discrete-time financial market with a fixed amount of $n \in \N$ times $t_1,\dots,t_n$ and one underlying asset $S=(S_{t_i})_{i=0,1,\dots,n}$. Extensions of this setup are discussed in Appendix~\ref{sec_extension}.

In the classical setup for model-independent pricing, which is referred to as the martingale optimal transport (MOT) case (introduced in \cite{beiglbock2013model}), prices for call options with all strikes and all maturities $t_j$, $j=1,\dots,n$, are observable at initial time $t_0$ and available for static trading. This means that one is able to initiate a buy-and-hold strategy into these call options. Since in this situation every twice-differentiable European payoff $u_i(S_{t_i})$ with $u_i \in C^2(\R)$ can be replicated using call options with different strikes (compare \cite{carr2001towards}), it is natural to also allow initiating static investments in European options.

Moreover, one allows to initiate a trading strategy into the underlying security that is dynamically adjusted over time.
Dynamic trading of European options is however not considered in the MOT setting. 
In contrast, in this paper, we consider dynamic trading in a finite amount of European options. We assume that for each maturity $t_j$ the market offers $N \in \N$ European options\footnote{In practice there may be a different amount of options available for each maturity $t_j$. We then would, purely for technically reasons, additionally consider options with constant payoff $0$ to be able to consider an equal amount of tradable options among maturities.} possessing this expiration date available for trading at all times $t_i < t_j$. We denote the set of options available for dynamic trading by $V$, with $\# V = n \cdot N$. The reduction to the finite subset $V$ accounts for a possible lack of liquidity in European options over time, see also \cite{neufeld2020model}. Additionally, we discuss in detail the case with infinitely many traded options in Section~\ref{sec_n_infinity}.

We denote by $\operatorname{P}_{t_i}\left(v_{j,k}\right)$ the price at time $t_i$ for a European option $v_{j,k} \in V$ with a non-negative Borel-measurable payoff function $v_{j,k}:\R_+\rightarrow \R_+$, where the index $j$ refers to the maturity $t_j$ and $k \in \{1,\dots,N\}$ labels the options. As an underlying sample space we consider $\Omega := \R_+^{n}\times\R_+^{n^2\cdot N}$. We write $\omega=(s_1,\dots,s_n,p_{1,1,1},\dots,p_{n,n,N})=(s,p)$ for $\omega \in \Omega$ with $s \in \R_+^n$ and $p \in \R_+^{n^2\cdot N}$. Then $(S_{t_i})_{i=1,\dots,n}$ is assumed to be the canonical process on the first $n$ components, i.e., for $i =1,\dots,n$ and all $(s,p) \in \Omega$ we have
\[
S_{t_i}(s,p)=s_i,
\]
and where we set $S_0 = s_0$ for some fixed $s_0 \in \R_+$.
Moreover, we set for  $i,j \in \{1,\dots,n\}$, $k \in \{1,\dots,N\}$ 
\[
\operatorname{P}_{t_i}\left(v_{j,k}\right)(s,p)= p_{i,j,k}.
\]
Furthermore, we denote by $\mathcal{P}(\Omega)$ the set of all probability measures on $\Omega$ and we define the filtration $\mathbb{F} = (\mathcal{F}_{t_s})_{s =0,1,\dots,n}$ through
$$
\mathcal{F}_{t_s}:= \sigma\left(\{S_{t_i}\text{ for }  0 \leq i \leq s\}\right)
$$
with $\mathcal{F}_{t_0}$ being trivial.
Next, we allow the financial agent to restrict to price paths she considers admissible. Thus, we introduce, similar as in \cite{bartl2020pathwise}, a subset $ \Xi \subseteq \Omega $ of admissible price paths. Below, in Remark~\ref{rem_xi}, we discuss the role of $\Xi$ in our setting in different examples.
\newpage
\begin{rem}[Choices of $\Xi$]\label{rem_xi}~{\newline}
\begin{itemize}
\item[(a)] It is of course possible to set $\Xi: = \Omega$ and hence not to impose any restrictions on the set of future paths.
\item[(b)] According to \cite{dolinsky2018super} and \cite{neufeld2018buy}, the maximal super-replication price at time $t_0$ for European options in fully incomplete markets coincides with the buy-and-hold-super-replication price and is given by today's value of its concave envelope, i.e., the smallest concave function larger or equal than the payoff function $f$ itself, here denoted by $f ^{\operatorname{conc}}$. For the sub-replication price one correspondingly considers the convex envelope, $f^{\operatorname{conv}}$, which is the greatest convex function smaller than the payoff. This means, we obtain a pricing rule of the form
\[
\operatorname{P}_{t_i}(v_{j,k})\in \left[v_{j,k}^{\operatorname{conv}}(S_{t_i}),v_{j,k}^{\operatorname{conc}}(S_{t_i})\right],
\]
i.e.,
\[
\Xi = \left\{(s,p) \in \Omega~\middle|~p_{i,j,k} \in \left[v_{j,k}^{\operatorname{conv}}(s_i),v_{j,k}^{\operatorname{conc}}(s_i)\right] \text{ for all } i,j,k\right\}.
\]
Using this approach, we also rediscover the standard no-arbitrage bounds for call options (see also \cite{cvitanic1999closed}, \cite{eberlein1997range}, and \cite{frey1999bounds})
\[
\operatorname{P}_{t_i}((S_{t_j}-K)^+)\in 
\left[(S_{t_i}-K)^+,S_{t_i}\right].
\]
\item[(c)] Suppose an investor believes that it is accurate to price call options under the risk-neutral measure of a Black--Scholes model with volatility $\widehat{\sigma}$. To incorporate uncertainty w.r.t. the choice of the volatility ${\sigma}$ one allows for $\sigma \in [\widehat{\sigma}-\varepsilon,\widehat{\sigma}+\varepsilon]$ for some $\varepsilon>0$ such that $\widehat{\sigma}-\varepsilon>0$. Due to the positive vega, pricing of call options in a Black--Scholes model is monotone w.r.t. the choice of $\sigma$ and we obtain a pricing rule of the form
\begin{equation}\label{eq_bs_pricing}
\begin{aligned}
\operatorname{P}_{t_i}\left((S_{t_j}-K)^+\right)\in \bigg[S_{t_i}&\mathcal{N}\left(d_{1,\widehat{\sigma}-\varepsilon}(S_{t_i},K)\right)-K\mathcal{N}\left(d_{2,\widehat{\sigma}-\varepsilon}(S_{t_i},K)\right),\\
&S_{t_i}\mathcal{N}\left(d_{1,\widehat{\sigma}+\varepsilon}(S_{t_i},K)\right)-K\mathcal{N}\left(d_{2,\widehat{\sigma}+\varepsilon}(S_{t_i},K)\right)\bigg]
\end{aligned}
\end{equation}
for $\mathcal{N}(\cdot)$ describing the cumulative distribution function of the standard normal distribution and with 
\begin{equation}\label{eq_d1_bs}
d_{1,\sigma}(x,K) =\frac{\ln\left(x/K\right)+\frac{\sigma^2}{2}(t_j-t_i)}{\sigma \sqrt{t_j-t_i}},\quad d_{2,\sigma}(x,K)=d_{1,\sigma}(x,K)-\sigma\sqrt{t_j-t_i}.
\end{equation}
More precisely, let $(K_{j,k})_{j=1,\dots,n \atop k=1,\dots,N}\subset \R_+$ denote the strikes of the traded call options. Then, we have
\[
V=\left\{v_{j,k}~\middle|~v_{j,k}:x \mapsto \left(x-K_{j,k}\right)^+\text{ for } j=1,\dots,n, k=1,\dots,N\right\}.
\]
and 
\begin{align*}
\Xi=\bigg\{(s,p)\in \Omega~\bigg|~p_{i,j,k} \in  \bigg[s_i&\mathcal{N}\left(d_{1,\widehat{\sigma}-\varepsilon}(s_i,K_{j,k})\right)-K_{j,k}\mathcal{N}\left(d_{2,\widehat{\sigma}-\varepsilon}(s_i,K_{j,k})\right),\\
&s_i\mathcal{N}\left(d_{1,\widehat{\sigma}+\varepsilon}(s_i,K_{j,k})\right)-K_{j,k}\mathcal{N}\left(d_{2,\widehat{\sigma}+\varepsilon}(s_i,K_{j,k})\right)\bigg] \text{ for all } i,j,k\bigg\}.
\end{align*}
This approach can, in principle, be extended to any kind of parametric model. In particular, as shown above, when prices of convex payoffs are increasing w.r.t.\,the input parameter, then the price bounds are attained by the bounds of the interval.
\item[(d)] The pricing rule can also be robust in the sense that it reflects general properties of the market or an admissible underlying process. For example, a Markov property for the valuation of options (similar as in \cite{sester2020robust}) can be incorporated through
\begin{equation}\label{eq_markov}
\begin{aligned}
\Xi = \big\{(s,p) \in \Omega ~|~&\text{For all } i,j,k \text{ there exists some Borel-measurable function } \\ &\hspace{3cm} f_{i,j,k}:\R \rightarrow \R \text{ such that }p_{i,j,k}=f_{i,j,k}(s_i)\big\}.
\end{aligned}
\end{equation}
Or if the difference $t_{i+1}-t_i$ is constant for all $i=1,\dots,n-1$, then a homogeneity assumption similar to \cite{eckstein2020martingale} can be modelled through
\begin{equation}
\begin{aligned}\label{eq_homogen}
\Xi = \big\{(s,p) \in \Omega ~|~&p_{i,j,k}=p_{i+l,j,+l,k} \\
&\text{for all } i,j,k,l \text{ s.t. } 1 \leq i+ l,j+l \leq n, k=1,\dots,N\big\}.
\end{aligned}
\end{equation}
If a discrete time financial market is considered as a discretized version of a continuous time market model, then in the continuous time model the restrictions to paths of the form \eqref{eq_markov} correspond to the requirement that prices of measurable payoffs $f:\R\rightarrow \R$ fulfill
\[
\E_\Q[f(S_t)|~\mathcal{F}_s]=\E_\Q[f(S_t)|~S_s]\text{ for all } 0\leq s\leq t,
\]
where here $(\mathcal{F}_s)_{s \geq 0}$ corresponds to a continuous time filtration and $\Q$ is some martingale measure. In contrast, the analogue of \eqref{eq_homogen} in a continuous time setting is the requirement that prices of derivatives with payoff $f$ fulfill
\[
\E_\Q[f(S_t)|~\mathcal{F}_s]=\E_\Q[f(S_{t+\tau})|~\mathcal{F}_{s+\tau}]\text{ for all } 0\leq s\leq t,\text{ and all } \tau \geq 0.
\]
\item[(e)] Given some $\Xi \subseteq \Omega$ we define for $m\in \N$ $s_i^1,\dots,s_i^m,p_{i,j,k}^1,\dots,p_{i,j,k}^m \in \R_+$
\[
\Xi_{\operatorname{grid}}^m:=\left\{(s,p)\in \Xi~\middle|~s_i \in \{s_i^1,\dots,s_i^m\},~p_{i,j,k}\in \{p_{i,j,k}^1,\dots,p_{i,j,k}^m\}\text{ for all } i,j,k\right\}.
\]
This allows to consider the valuation problem on a discrete grid, which is particularly useful for implementing the approach numerically, i.e., via linear programming, compare also Algorithm~\ref{algo_mot_lp}.
\end{itemize}
\end{rem}

\subsection{Valuation of Derivatives}
We are interested in finding model-free price bounds for some exotic financial derivative $\Phi(S_{t_1},\dots,S_{t_n})$, where $\Phi:\R^n_+ \rightarrow \R_+$ is Borel-measurable. For notational simplicity, we focus on finding the upper bound\footnote{This is no restriction, since with the same approach one can easily obtain the lower bound through the relation $\inf_x f(x)= -\sup_x -f(x)$.}.
\subsubsection{The primal approach}
A first approach to determine the value of $\Phi$ is to compute the expectation of $\Phi$ under a risk-neutral pricing measure associated to a potential model of an underlying financial market, i.e., among all measures restricted to paths $\Xi$ that are arbitrage-free. The maximal model-price determines the upper price bound. We call this approach the primal approach.

 To determine the set of such models, we observe that under any admissible pricing measure $\Q$, the price process $\left(\operatorname{P}_{t_i}\left(v_{j,k}\right)\right)_{i=1,\dots,n}$ is required to be a martingale, from which we obtain the required representation that
\begin{equation}\label{eq_pti_martingale}
\begin{aligned}
\operatorname{P}_{t_i}\left(v_{j,k}\right)=\E_\Q\left[\operatorname{P}_{t_j}\left(v_{j,k}\right)\middle|\mathcal{F}_{t_i}\right]=\E_\Q\left[v_{j,k}(S_{t_j})\middle|\mathcal{F}_{t_i}\right]~\Q\text{-a.s.}
\end{aligned}
\end{equation}
for all $1\leq i\leq j \leq n$, $k=1,\dots,N$. We further, extend the validity of \eqref{eq_pti_martingale} to all $1 \leq i,j \leq n$, since this definition implies $\operatorname{P}_{t_i}\left(v_{j,k}\right)=v_{j,k}~\Q\text{-a.s.}$ if $j \leq i$, i.e., the price processes are assumed to be constant after expiration date (for a fixed price path). 

Additionally, we only consider such models consistent with market prices of call options\footnote{A model with associated pricing measure $\Q$ is said to be consistent with the market price $\pi(v)$ of the option $v$ if $\E_\Q[v]=\pi(v)$.}. According to \cite{breeden1978prices}, consistency of pricing measures w.r.t.\ call option prices for all maturities $t_1,\dots,t_n$ and for all strikes determines the one-dimensional marginal distributions of $S_{t_i}$ for all $i=1,\dots,n$, which we from now on denote by $\mu_i$, i.e., $S_{t_i} \sim \mu_i$. To ensure absence of model-independent arbitrage through static option trading, we assume that $\mu_1 \preceq \mu_2 \preceq \dots \preceq \mu_n$, where $\preceq$ denotes the usual convex order for measures with finite first moments, and that all marginals possess the same mean given by the inital value  $s_0$, compare also \cite{acciaio2016model}, \cite{kellerer1972markov} and \cite{strassen1965existence}. Thus, we include this condition in the following standing assumption.\footnote{From now on, we always assume that Assumption \ref{asu_standing_1} holds and we do not repeat it in the statements of our results.} Further, we require the set $\Xi$ to be Borel-measurable in order to be able to define reasonable integral expressions w.r.t.\ $\Xi$. 

\begin{asu}[Standing Assumption]\label{asu_standing_1}
~
\begin{itemize}
\item[(a)]
The marginals $\mu_1,\dots, \mu_n \in \mathcal{P}(\R_+)$ have finite first moments, mean equal to $s_0\in \R_+$, and 
$$
\mu_1 \preceq \mu_2 \preceq \cdots \preceq \mu_n.
$$
\item[(b)]
The set $\Xi \subseteq\Omega$ is Borel-measurable.
\end{itemize}

\end{asu}

In any potential arbitrage-free model of a financial market, $S$ and  $\operatorname{P}$ are martingales and the marginals of $S$ are fixed by $\mu_1,\dots,\mu_n$ through the required consistency with the observed vanilla option prices. We recall that $V=\{v_{j,k}, j=1,\dots,n, k =1,\dots,N\}$, then a risk-neutral measure of an admissible model is consequently an element of
\begin{align*}
\mathcal{M}_V(\Xi,\mu_1,\dots,\mu_n):=\bigg\{\Q \in \mathcal{P}(\Omega)~\bigg|~ &\Q(\Xi)=1;\\
&\E_\Q[S_{t_{i}}|
\mathcal{F}_{t_{i-1}}]=S_{t_{i-1}} ~\Q\text{-a.s.}\text{ for all } i =1,\dots,n; \\
&\Q \circ S_{t_i}^{-1} = \mu_i \text{ for all } i =1,\dots,n; \\
 &\hspace{0cm} \E_{\Q}[v_{j,k}(S_{t_j})|\mathcal{F}_{t_i}] 
=\operatorname{P}_{t_i}(v_{j,k}) ~\Q\text{-a.s.}\text{ for all } \\ 
&i=1,\dots,n,~j =1,\dots,n,~k=1,\dots,N \bigg\}.
\end{align*}

\begin{rem}
For all $\Q\in \mathcal{M}_V(\Xi,\mu_1,\dots,\mu_n)$ and for all $i,j=1,\dots,n$ the random variable $\operatorname{P}_{t_i}(v_{j,k})$ is $\mathcal{F}_{t_i}^{\Q}$-measurable by \eqref{eq_pti_martingale}, where $\mathcal{F}_{t_i}^{\Q}$ denotes the $\Q$-$\mathcal{F}$-completion of $\mathcal{F}_{t_i}$.
In particular, $\left(\operatorname{P}_{t_i}(v_{j,k})\right)_{i=1,\dots,n}$ is adapted to $\mathbb{F}^{\Q}=\left(\mathcal{F}_{t_i}^{\Q}\right)_{i=1,\dots,n}$. Moreover, note that the martingale property in the definition of $\mathcal{M}_V(\Xi,\mu_1,\dots,\mu_n)$ does not change if we define it w.r.t. $\mathbb{F}^{\Q}$ instead of $\mathbb{F}$.
\end{rem}
The upper price bound of $\Phi$ using the primal approach is then given by the maximal expectation of $\Phi$ w.r.t. measures from ${\mathcal{M}}_V(\Xi,\mu_1,\dots,\mu_n)$, namely
\begin{equation}\label{eq_p_xi_def}
\operatorname{P}_\Xi(\Phi):=\sup_{\Q \in {\mathcal{M}}_V(\Xi,\mu_1,\dots,\mu_n)}\int_{\Omega} \Phi(s)\D \Q(s,p).
\end{equation}
The four properties for a measure $\Q$ to be in $\mathcal{M}_V(\Xi,\mu_1,\dots,\mu_n)$ -- i.e., the paths of $S$ are restricted to $\Xi$, $S$ is a $\Q$-martingale, $S$ possesses correct fixed marginals, and that $\operatorname{P}_{t_i}(v_{j,k})$ can be regarded as a conditional expectation of $v_{j,k}$ -- can also be characterized by integral equations. Thus, the set $\mathcal{M}_V(\Xi,\mu_1,\dots,\mu_n)$ can equivalently be written as
\begin{equation}\label{eq_martingale_char_M_V}
\begin{aligned}
\mathcal{M}_V(\Xi,\mu_1,\dots,\mu_n)=\bigg\{\Q \in \mathcal{P}(\Omega)\bigg| &\int_{\Omega}\one_{\Xi}(s,p)\D\Q(s,p)=1;\\
&\int_{\Omega} H(s_1,\dots,s_i)(s_{i+1}-s_i)\D\Q(s,p)=0 \\
&\hspace{1.5cm}\text{ for all } H \in C_b(\R_+^i),~i =1,\dots,n-1; \\
&\int_{\R_+} u_i(s_i)\D \mu_i(s_i)=\int_{\Omega} u_i(s_i)\D \Q(s,p) \\
&\hspace{1.5cm}\text{ for all } u_i \in C_{\operatorname{lin}}(\R_+,\R_+),~i =1,\dots,n; \\
&\hspace{0cm}\int_{\Omega} H(s_1,\dots,s_i)(v_{j,k}(s_j)-p_{i,j,k})\D\Q(s,p)=0 \\
&\hspace{0cm}\text{ for all }H \in C_b(\R_+^i),~i,j =1,\dots,n,~k=1,\dots,N
\bigg\},
\end{aligned}
\end{equation}
where for any $k \in \N$
$$C_{\operatorname{lin}}\left(\R_+^k,\R_+\right):= \left\{f\in C\left(\R_+^k,\R_+\right)~\middle|~ \sup_{(x_1,\dots,x_k)\in \R_+^k} \frac{f(x_1,\dots,x_k)}{1+\sum_{i=1}^kx_i}  < \infty\right\}
$$
denotes the class of positive continuous functions with linear growth on $\R^k_+$.
In the MOT case Assumption~\ref{asu_standing_1}~(a) ensures non-emptiness of the set
\begin{align*}
\mathcal{M}(\mu_1,\dots,\mu_n):=\bigg\{\Q \in \mathcal{P}(\R_+^n)~\bigg|~&\E_\Q[S_{t_{i+1}}|
\mathcal{F}_{t_i}]=S_{t_i}~\Q\text{-a.s.},\\~&\Q\circ S_{t_i}^{-1} = \mu_i \text{ for all } i =1,\dots,n \bigg\},
\end{align*}
see for example \cite{kellerer1972markov}. We ensure in our setting the non-emptiness of $\mathcal{M}_V(\Xi,\mu_1,\dots,\mu_n)$ by an additional condition, see Theorem~\ref{thm_1}~(a).

\subsubsection{The dual approach}
A second valuation approach relies on the determination of the smallest price among model-independent super-replication strategies of $\Phi$ on $\Xi$. We refer also to \cite{bartl2020pathwise}, \cite{bartl2019duality}, \cite{hou2018robust}, and \cite{mykland2003financial}.
This means we consider strategies of the form 
\begin{equation*}
\begin{aligned}
\Psi^V_{(H_i),(H_{i,j,k}),(u_i)}(s,p):=\sum_{i=1}^n u_i(s_i)&+\sum_{i=1}^{n-1}H_i(s_1,\dots,s_i) (s_{{i+1}}-s_{i})\\
&+\sum_{i=1}^{n-1}\sum_{j=i+1}^{n}\sum_{k=1}^{N} H_{i,j,k}(s_{1},\dots,s_{i})\left(v_{j,k}(s_{j})-p_{i,j,k}\right).
\end{aligned}
\end{equation*}
for functions $u_i:\R_+ \rightarrow \R$, $H_i,H_{i,j,k}:\R_+^i \rightarrow \R$ that can be interpreted as trading positions. In addition to the semi-static trading from the martingale optimal transport approach we encounter $\sum_{i=1}^{n-1}\sum_{j=i+1}^{n}\sum_{k=1}^{N} H_{i,j,k}(s_{1},\dots,s_{i})\left(v_{j,k}(s_{j})-p_{i,j,k}\right)$, which is the profit of a self-financing dynamically adjusted trading position in European options.
 We call this approach the dual approach. Given the marginal distributions $\mu_i$ for $i =1,\dots,n$, the fair price of a strategy $\Psi^V_{(H_i),(H_{i,j,k}),(u_i)}(s,p)$ calculates as $\sum_{i=1}^n \int_{\R} u_i(s_i) \D \mu_i(s_i)$, since the terms 
\[
\sum_{i=1}^{n-1}H_i(s_1,\dots,s_i) (s_{i+1}-s_{i})+\sum_{i=1}^{n-1}\sum_{j=i+1}^{n}\sum_{k=1}^{N} H_{i,j,k}(s_1,\dots,s_i)(v_{j,k}(s_j)-p_{i,j,k})
\]
are profits and losses of self-financing strategies and therefore are considered to be costless.
The upper price bound for $\Phi$ using the dual approach is thus given through the super-replication functional:
\begin{equation}\label{eq_super_hedge}
\begin{aligned}
\operatorname{D}_\Xi(\Phi):=\inf_{\substack{u_i\in C_{\operatorname{lin}}(\R_+,\R_+)\\ H_i,H_{i,j,k} \in C_b(\R^i_+)}} \bigg\{\sum_{i=1}^n \int_{\R_+} u_i(s_i) \D \mu_i(s_i)~\bigg|~&\Psi^V_{(H_i),(H_{i,j,k}),(u_i)}(s,p) \geq \Phi(s)\\ 
&\hspace{2cm}\text{ for all } (s,p) \in \Xi\bigg\}.
\end{aligned}
\end{equation}

\subsection{Main Results}
Our first main result imposes that - under mild conditions - the two presented valuation approaches yield the same value. Furthermore, we state criteria for the non-emptiness and compactness of $\mathcal{M}_V(\Xi,\mu_1,\dots,\mu_n)$, guaranteeing the existence of an optimal pricing measure. We refer to Section~\ref{sec_proofs} for the corresponding proofs of the main results stated in the following theorems and remarks.\\
The following set of continuous functions and set of probability measures on $\Omega$ turn out to be useful for our first main result. Let $$
C_{\operatorname{lin},S} := \left\{f \in C(\Omega) ~\middle|~ \sup_{(s,p)\in \Omega} \frac{|f(s,p)|}{\left(1+\sum_{i=1}^ns_i\right)} < \infty \right\}
$$
and let
$$
\mathcal{P}_{\operatorname{lin},S}:=\left\{\Q \in \mathcal{P}(\Omega)~\middle|~\int_{\Omega} \sum_{i=1}^n s_i \D\Q(s,p) < \infty \right\}.
$$
Moreover, we denote by $\sigma\left(\mathcal{P}_{\operatorname{lin},S},C_{\operatorname{lin},S}\right)$ the weak topology on $\mathcal{P}_{\operatorname{lin},S}$ induced by $C_{\operatorname{lin},S}$.

\begin{thm}\label{thm_1}
Let $v_{j,k} \in C_{\operatorname{lin}}(\R_+,\R_+)$ for all $j=1,\dots,n$, $k=1,\dots,N$. Then, the following holds.
\begin{enumerate}
\item[(a)] The set $\mathcal{M}_V(\Xi,\mu_1,\dots,\mu_n)\subset \mathcal{P}(\Omega)$ is non-empty if and only if there exists some $\Q \in \mathcal{M}(\mu_1,\dots,\mu_n)\subset \mathcal{P}(\R^n_+)$ such that\footnote{By abuse of notation $(S_{t_i})$ denotes in \eqref{eq_condexp_in_p_1} the canonical process on $\R_+^{n}$.}
\begin{equation}\label{eq_condexp_in_p_1}
\begin{aligned}
\left((S_{t_1},\dots,S_{t_n}), \left(\E_\Q[v_{j,k}(S_{t_j})~|~\mathcal{F}_{t_i}]\right)_{i,j =1,\dots,n, \atop k =1,\dots,N}\right) \in \Xi ~\Q\text{-a.s.}
\end{aligned}
\end{equation}
\item[(b)]
Let $\Phi \in C_{\operatorname{lin}}\left(\R_+^n,\R_+\right)$. Further assume that $\mathcal{M}_V(\Xi,\mu_1,\dots,\mu_n) \neq \emptyset$ and that $\Xi$ is closed. Then $\mathcal{M}_V(\Xi,\mu_1,\dots,\mu_n)$ is $\sigma\left(\mathcal{P}_{\operatorname{lin},S},C_{\operatorname{lin},S}\right)$-compact,
\begin{align*}
\operatorname{P}_\Xi(\Phi)=\operatorname{D}_\Xi(\Phi),
\end{align*}
and the primal value in \eqref{eq_p_xi_def} is attained.
\end{enumerate}
\end{thm}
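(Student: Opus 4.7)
The plan is to prove part (a) by a direct construction and part (b) by first establishing $\sigma(\mathcal{P}_{\operatorname{lin},S},C_{\operatorname{lin},S})$-compactness of $\mathcal{M}_V(\Xi,\mu_1,\dots,\mu_n)$, then proving strong duality via a Hahn-Banach separation, and finally reading off attainment from compactness plus continuity of $\Q\mapsto\int\Phi\D\Q$. For part (a), the \emph{only if} direction is immediate: projecting any $\Q\in\mathcal{M}_V(\Xi,\mu_1,\dots,\mu_n)$ onto the first $n$ coordinates yields $\widetilde{\Q}\in\mathcal{M}(\mu_1,\dots,\mu_n)$, and by \eqref{eq_pti_martingale} the coordinate $p_{i,j,k}$ agrees $\Q$-a.s.\ with $\E_{\widetilde{\Q}}[v_{j,k}(S_{t_j})\mid\mathcal{F}_{t_i}]$, so $\Q(\Xi)=1$ translates into \eqref{eq_condexp_in_p_1} for $\widetilde{\Q}$. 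For the \emph{if} direction, I would pick Borel-measurable versions of the $\mathcal{F}_{t_i}$-conditional expectations (by Doob-Dynkin they are Borel functions of $(s_1,\dots,s_i)$), assemble them into a single Borel map $\iota\colon\R_+^n\to\Omega$, and verify that the pushforward $\iota_\# \Q$ lies in $\mathcal{M}_V(\Xi,\mu_1,\dots,\mu_n)$ by checking each of the four defining properties against the integral characterization \eqref{eq_martingale_char_M_V}.

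For part (b), compactness goes in two steps. Classical weak tightness in $\mathcal{P}(\Omega)$ is immediate: the $S$-marginals are pinned to $\mu_i$ and hence tight, while for every price coordinate the martingale identity \eqref{eq_pti_martingale} gives $\E_\Q[p_{i,j,k}]=\int v_{j,k}\D\mu_j$, which is finite by Assumption~\ref{asu_standing_1}~(a) together with the linear-growth hypothesis on $v_{j,k}$, so Markov's inequality produces tightness in the $p$-coordinates as well. To upgrade this to the weighted weak topology one needs uniform integrability of $s\mapsto\sum_{i=1}^n s_i$ along any sequence in $\mathcal{M}_V(\Xi,\mu_1,\dots,\mu_n)$; since $\int s_i\D\Q=\int s_i\D\mu_i$ is a fixed finite constant along the whole set, this follows by a de la Vall\'ee-Poussin argument using a single family of dominating functions. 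Closedness then comes from \eqref{eq_martingale_char_M_V}: every defining integral uses either bounded continuous or linear-growth continuous integrands, hence is continuous in the chosen topology, while closedness of $\Xi$ transfers to closedness of $\{\Q\in\mathcal{P}(\Omega)\,:\,\Q(\Xi)=1\}$ via the portmanteau theorem.

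The inequality $\operatorname{P}_\Xi(\Phi)\le\operatorname{D}_\Xi(\Phi)$ is then obtained by integrating any admissible super-hedge against $\Q\in\mathcal{M}_V(\Xi,\mu_1,\dots,\mu_n)$: the $S$-martingale term vanishes by the martingale property, the dynamic option terms vanish by \eqref{eq_pti_martingale}, and the remaining marginal integrals reduce to $\sum_i\int u_i\D\mu_i$. For the reverse inequality I would follow the Hahn-Banach pattern of \cite{beiglbock2013model} and \cite{bartl2020pathwise}: fix $c<\operatorname{D}_\Xi(\Phi)$ so that $\Phi-c$ does not belong to the convex cone of super-hedgeable payoffs with marginal cost $\le 0$ on $\Xi$; separate by a continuous linear functional on the space of linear-growth continuous functions on $\Xi$; use the compactness established above to promote this functional to a probability measure whose conditions for being an element of $\mathcal{M}_V(\Xi,\mu_1,\dots,\mu_n)$ are read off from the four integral identities in \eqref{eq_martingale_char_M_V}; and deduce $\operatorname{P}_\Xi(\Phi)\ge c$. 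Attainment of the primal in \eqref{eq_p_xi_def} is then immediate, since $\Phi\in C_{\operatorname{lin}}\subset C_{\operatorname{lin},S}$ makes $\Q\mapsto\int\Phi\D\Q$ continuous on a compact set.

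The main obstacle is the strong-duality step under the weighted weak topology. Because $\Omega$ is unbounded and the test cone only grows linearly, the usual Daniell-Stone / Riesz representation argument must be replaced by a more careful separation on a weighted function space on $\Xi$, and the closedness of the hedging cone in the corresponding topology has to be extracted from Assumption~\ref{asu_standing_1}~(a) together with the linear-growth of $v_{j,k}$ and $\Phi$ in order to rule out mass escape to infinity when passing to the limit. A secondary but genuine subtlety is that closedness of $\Xi$ is needed simultaneously in two different places, namely in the separation argument for part~(b) and in the pushforward construction of part~(a), to guarantee that $\iota_\# \Q$ is carried by $\Xi$ itself rather than by its closure.
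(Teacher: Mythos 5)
Your part (a) is essentially the paper's argument (project for the ``only if'' direction, push forward under $s\mapsto\bigl(s,(\E_\Q[v_{j,k}(S_{t_j})\mid\mathcal{F}_{t_i}](s))_{i,j,k}\bigr)$ for the ``if'' direction), and no closedness of $\Xi$ is needed there: \eqref{eq_condexp_in_p_1} says precisely that the push-forward charges $\Xi$ itself, so the ``subtlety'' you flag at the end is not an issue for (a). The weak inequality $\operatorname{P}_\Xi(\Phi)\le\operatorname{D}_\Xi(\Phi)$ and the attainment argument are also fine in outline.

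The genuine gap is the strong-duality step in (b), which is exactly the part you leave as an ``obstacle''. Your plan is: prove compactness of $\mathcal{M}_V(\Xi,\mu_1,\dots,\mu_n)$ first, then separate $\Phi-c$ from the cone of super-replicable claims by a continuous linear functional on a weighted space of linear-growth functions, and ``use the compactness established above to promote this functional to a probability measure''. Compactness of the primal feasible set does not do that work: the separating functional lives in the dual of a space of (unbounded) continuous functions on the non-compact set $\Omega=\R_+^n\times\R_+^{n^2N}$, and such duals contain finitely additive elements that are not representable by measures; moreover, the Hahn--Banach step itself requires the super-hedging cone to be closed (or to have interior) in the weighted topology, and establishing that closedness is essentially as hard as the duality — in the classical MOT proofs it is avoided, not proved. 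What is missing is precisely the regularity ingredient that rules out loss of countable additivity/mass at infinity: in the paper this is supplied by applying the biconjugate duality theorem of Bartl et al.\ (\cite[Theorem 2.2]{bartl2019robust}) to the increasing convex functional $\operatorname{D}_\Xi$ on $C_{\operatorname{lin},S}$, after verifying its continuity-from-above condition (R1) as in \cite{eckstein2020martingale}; the conjugate $\operatorname{D}_\Xi^*$ is then computed to be the indicator of $\mathcal{M}_V(\Xi,\mu_1,\dots,\mu_n)$ (Urysohn functions handle $\Q(\Xi^c)>0$, the integral characterization \eqref{eq_martingale_char_M_V} handles the rest), and compactness of $\mathcal{M}_V$ and primal attainment come out of the same theorem as compactness of the sublevel sets of $\operatorname{D}_\Xi^*$ — i.e.\ in the paper compactness is a consequence of the duality machinery, not a prerequisite. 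A secondary slip in your compactness/closedness step: the integrands $H(s_1,\dots,s_i)\,(v_{j,k}(s_j)-p_{i,j,k})$ in \eqref{eq_martingale_char_M_V} grow linearly in the $p$-coordinates and therefore do \emph{not} belong to $C_{\operatorname{lin},S}$ (whose weight is $1+\sum_i s_i$ only), so their integrals are not automatically continuous along $\sigma(\mathcal{P}_{\operatorname{lin},S},C_{\operatorname{lin},S})$-convergent sequences; this can be repaired by a truncation argument using uniform integrability of the $p$-coordinates (they are conditional expectations of $v_{j,k}(S_{t_j})$, whose law is fixed), but as written the claim ``hence continuous in the chosen topology'' is not correct.
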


\begin{rem}\label{rem_thm_1}~
\begin{itemize}
\item[(a)]
As a consequence of Theorem~\ref{thm_1}~(a) we have that
\[
\operatorname{P}_\Xi(\Phi)=\sup_{\Q \in \mathcal{M}(\mu_1,\dots,\mu_n):\eqref{eq_condexp_in_p_1}\text{ holds}}\int_{\R^n_+} \Phi(s)\D \Q(s),
\]
i.e., $\operatorname{P}_\Xi(\Phi)$ can be considered as a constrained martingale optimal transport problem.
\item[(b)]
If $\Xi = \Omega$, then, according to Theorem~\ref{thm_1}, the non-emptiness of $\mathcal{M}_V(\Xi,\mu_1,\dots,\mu_n)$ is equivalent to the non-emptiness of $\mathcal{M}(\mu_1,\dots,\mu_n)$, independent of $V$.
\item[(c)] Let $\K \subset \R_+^n$ be compact, and define
\[
\Xi =\left\{(s,p) \in \K \times \R_+^{n^2 \cdot N} \subset \Omega~|~\text{For all }i,j,k \text{ there exists } f_{i,j,k} \in \mathfrak{F}_{i,j,k} \text{ s.t. } p_{i,j,k} =f_{i,j,k}(s_1,\dots,s_i)\right\}
\]
for some classes of functions $\mathfrak{F}_{i,j,k} \subset C_{\operatorname{lin}}\left(\R_+^i,\R_+\right)$ whose restrictions $$
\mathfrak{F}_{i,j,k}|_{\widetilde{\K_i}}:=\left\{f_{i,j,k}|_{\widetilde{\K_i}}:\widetilde{\K_i}\subseteq \R_+^i\rightarrow \R_+~\middle|~ f_{i,j,k} \in \mathfrak{F}_{i,j,k}\right\}
$$ onto any compact set $\widetilde{\K_i} \subset \R^i_+$ are compact in the uniform topology on $C(\widetilde{\K_i})$ and which fulfill
\begin{equation}\label{eq_cond_clin_bounded}
\sup_{f_{i,j,k}\in \mathfrak{F}_{i,j,k}}\sup_{(s_1,\dots,s_i)\in \R_+^i}\frac{f_{i,j,k}(s_1,\dots,s_i)}{1+\sum_{\ell=1}^i s_{\ell}}<\infty.
\end{equation}
Note that, as a consequence of the compactness of $\mathfrak{F}_{i,j,k}|_{\widetilde{\K_i}}$ for any compact set $\widetilde{\K_i}\subset \R_+^i$, the set $\Xi$ is compact. Under these conditions, we define 
\begin{align*}
{\mathcal{Q}}_{(\mathfrak{F}_{i,j,k})}:=\bigg\{\Q \in \mathcal{M}(\mu_1,\dots,\mu_n)~\bigg|~\Q(\K)=1,&\text{ and for all } i,j,k \text{ there exists } f_{i,j,k} \in \mathfrak{F}_{i,j,k}\text{ s.t. } \\
&\hspace{2.5cm}\E_\Q[v_{j,k}(S_{t_j})~|~ \mathcal{F}_{t_i}]=f_{i,j,k} ~\Q\text{-a.s.} \bigg\}
\end{align*}
and
\begin{align*}
\operatorname{D}_{(\mathfrak{F}_{i,j,k})}(\Phi)&:=\inf_{\substack{u_i\in C_{\operatorname{lin}}(\R_+,\R_+)\\ H_i,H_{i,j,k} \in C_b(\R^i_+)}} \bigg\{\sum_{i=1}^n \int_{\R_+} u_i\D \mu_i~\bigg|~\Psi^V_{(H_i),(H_{i,j,k}),(u_i)}(s,\left(f_{i,j,k}(s_1,\dots,s_i)\right)_{i,j,k}) \geq \Phi(s)\\ 
&\hspace{9.3cm}\text{ for all } s \in \K, f_{i,j,k} \in \mathfrak{F}_{i,j,k}\bigg\}.
\end{align*}
Then, under the assumptions of Theorem~\ref{thm_1}~(b), we obtain that
\begin{align*}
\operatorname{P}_\Xi(\Phi)&=\sup_{\Q \in {\mathcal{Q}}_{(\mathfrak{F}_{i,j,k})}}\int_{\R^n_+} \Phi(s) \D \Q(s)=\operatorname{D}_{(\mathfrak{F}_{i,j,k})}(\Phi).
\end{align*}
This particular structure of $\Xi$ is fulfilled for example in the setting discussed in Remark~\ref{rem_xi}~(b) and Remark~\ref{rem_xi}~(c) when the price paths are restricted to $\K$. Indeed, in Remark~\ref{rem_xi}~(b) we obtain for the case of call options with strikes $K_{j,k} \in \R_+$ that 
\begin{equation*}
\mathfrak{F}_{i,j,k}=\left\{(s_1,\dots,s_i)\mapsto (s_i-y)_+\text{ for } y \in [0,K_{j,k}]\right\}.
\end{equation*}
Moreover, in the setting of Remark~\ref{rem_xi}~(c) we get for each $i,j,k$
\begin{equation*}
\mathfrak{F}_{i,j,k}=\left\{(s_1,\dots,s_i)\mapsto s_i\mathcal{N}\left(d_{1,{\sigma}}(s_i,K_{j,k})\right)-K_{j,k}\mathcal{N}\left(d_{2,{\sigma}}(s_i,K_{j,k})\right)\text{ for } \sigma \in [\widehat{\sigma}-\varepsilon,\widehat{\sigma}+\varepsilon]\right\},
\end{equation*}
which satisfies, as required, that $\mathfrak{F}_{i,j,k}|_{\widetilde{\K_i}}\subseteq C(\widetilde{\K_i})$ is compact for every compact set $\widetilde{\K_i}  \subset \R_+^i$ due to the Arzelà–Ascoli theorem. Further, we have
\[
s_i\mathcal{N}\left(d_{1,{\sigma}}(s_i,K_{j,k})\right)-K_{j,k}\mathcal{N}\left(d_{2,{\sigma}}(s_i,K_{j,k} )\right) \leq s_i+K_{j,k},
\] 
i.e., the set is indeed contained in $C_{\operatorname{lin}}\left(\R_+^i,\R_+\right)$.
\end{itemize}
\end{rem}
The following remark shows that without restricting the set of possible prices of the dynamically traded options, i.e., when setting $\Xi: = \Omega$, we do not obtain any improved price bounds in comparison with the classical MOT formulation. This motivates to define a set of restricted pricing rules for European options in order to obtain improved price bounds for $\Phi$.

\begin{rem}\label{rem_eq_mot_with_and_without_constraints}
Let $\Phi \in C_{\operatorname{lin}}\left(\R_+^n,\R_+\right)$. If for all $s \in \R_+^n$ we have that
\begin{equation}\label{eq_sp_in_xi}
(s,\widetilde{p})\in \Xi
\end{equation} 
for $\widetilde{p} \in \R_+^{n^2 \cdot N}$ with $\widetilde{p}_{i,j,k}=v_{j,k}(s_j)$ for all $i,j=1,\dots,n$, $k=1,\dots,N$, then 
\begin{equation}\label{eq_D_P_no_improvement}
\operatorname{D}_\Xi(\Phi)=\sup_{\Q \in \mathcal{M}(\mu_1,\dots,\mu_n)}\int_{\R_+^n}\Phi(s)\D\Q(s).
\end{equation}
Note that condition \eqref{eq_sp_in_xi} holds particularly if $\Xi: = \Omega$.
Further, note that if \eqref{eq_sp_in_xi} holds, then equality \eqref{eq_D_P_no_improvement} holds independently of the amount of traded options $N\in \N$, i.e., gradually increasing the number of dynamically traded options only comes with improved price bounds if we introduce further restrictive pricing rules, as done in the following Theorem~\ref{thm_critical_points}.

\end{rem}

Next, we investigate the sensitivity of the upper price bound of some exotic derivative $\Phi$ w.r.t.\,a change in the pricing rules of dynamically traded options.
More specifically, we study the effect of perturbations of the pricing rules of the dynamically traded options on the upper price bound of $\Phi$ and whether the chosen pricing rule implies improved price bounds in comparison with the bounds that emerge when no option is traded dynamically. To this end, we impose an additional assumption on the shape of the pricing rules.

\begin{asu}\label{asu_form_xi}
Let $\Xi\subset \Omega$ be of the form
\[
\Xi \equiv \Xi_{(\underline{p}_{i,j,k},\overline{p}_{i,j,k})}:=\left\{(s,p)\in \Omega~\middle|~p_{i,j,k} \in \left[\underline{p}_{i,j,k}(s_1,\dots,s_i),\overline{p}_{i,j,k}(s_1,\dots,s_i)\right] \text{ for all } i,j,k\right\}
\]
for some Borel-measurable functions $\underline{p}_{i,j,k},\overline{p}_{i,j,k}:\R_+^i \rightarrow \R_+$ for $i,j=1,\dots,n$, $k=1,\dots,N$.
\end{asu}

\begin{rem}
\begin{itemize}
\item[(a)]A sufficient condition for a set $\Xi \equiv \Xi_{(\underline{p}_{i,j,k},\overline{p}_{i,j,k})}$ satisfying Assumption~\ref{asu_form_xi} to be closed is that each $\underline{p}_{i,j,k},\overline{p}_{i,j,k}:\R_+^i \rightarrow \R_+ $, $i,j=1,\dots,n$, $k=1,\dots,N$ is continuous.
\item[(b)] Note that the examples for $\Xi$ in Remark~\ref{rem_xi}~(b) and in Remark~\ref{rem_xi}~(c) both satisfy Assumption~\ref{asu_form_xi}.
\end{itemize}

\end{rem}

We investigate if the boundaries $[\underline{p}_{i,j,k}(\cdot),\overline{p}_{i,j,k}(\cdot)]$ imply improved price bounds and if not, to which extend the boundaries $[\underline{p}_{i,j,k}(\cdot),\overline{p}_{i,j,k}(\cdot)]$ need to be perturbed to directly imply improved price bounds of $\Phi$ in comparison with the MOT formulation, i.e., to obtain $\operatorname{P}_{\Xi}(\Phi) < \operatorname{P}_{\Omega}(\Phi)$. The following theorem asserts precisely how the pricing rules $\underline{p}_{i,j,k},\overline{p}_{i,j,k}$ have to be defined to obtain improved price bounds for $\Phi$. For this, we define for a fixed financial derivative $\Phi \in C_{\operatorname{lin}}\left(\R_+^n,\R_+\right)$ the set of optimizers of the primal problem
\[
\widehat{\mathcal{M}}_V(\Xi,\mu_1,\dots,\mu_n) := \left\{\Q \in \mathcal{M}_V(\Xi,\mu_1,\dots,\mu_n) ~\text{ s.t. }~\int_{\Xi}\Phi \D \Q = \operatorname{P}_\Xi(\Phi)\right\},
\]
which is non-empty under the assumptions of Theorem~\ref{thm_1}~(b). 
\begin{thm}\label{thm_critical_points}
Let the assumptions of Theorem~\ref{thm_1}~(b) hold and let $\Xi$ be of the form described in Assumption~\ref{asu_form_xi}. Then the following holds.
\begin{itemize}
\item[(a)]
We have that
\[
\operatorname{P}_{\Xi_{(\underline{p}_{i,j,k},\overline{p}_{i,j,k})}}(\Phi)<\operatorname{P}_\Omega(\Phi)
\]
if and only if for all $\Q \in \widehat{\mathcal{M}}_V(\Omega,\mu_1,\dots,\mu_n)$ there exist $i,j \in \{1,\dots,n\},k\in \{1,\dots,N\}$ such that 
\begin{equation}\label{eq_thm_condition1}
\underline{p}_{i,j,k} > \E_\Q\left[v_{j,k}(S_{t_{j}})~\middle|~\mathcal{F}_{t_{i}}\right]  
\end{equation}
or
\begin{equation}\label{eq_thm_condition2}
\overline{p}_{i,j,k} < \E_\Q\left[v_{j,k}(S_{t_{j}})~\middle|~\mathcal{F}_{t_{i}}\right]
\end{equation}
on some Borel-measurable set $A \subset \Omega$ with $\Q(A)>0$.
\item[(b)]
For all $\varepsilon > 0$ such that $\mathcal{M}_{V}\left(\Xi_{(\underline{p}_{i,j,k}+\varepsilon,\overline{p}_{i,j,k})},\mu_1,\dots,\mu_n\right) \neq \emptyset$ we have that
\[
\operatorname{P}_{\Xi_{(\underline{p}_{i,j,k}+\varepsilon,\overline{p}_{i,j,k})}}(\Phi)<\operatorname{P}_{\Xi_{(\underline{p}_{i,j,k},\overline{p}_{i,j,k})}}(\Phi)
\]
if and only if for all $\Q \in \widehat{\mathcal{M}}_{V}\left(\Xi_{(\underline{p}_{i,j,k},\overline{p}_{i,j,k})},\mu_1,\dots,\mu_n\right) $ there exist $i,j \in \{1,\dots,n\},k\in \{1,\dots,N\}$ such that 
\begin{equation}\label{eq_thm_condition3}
\underline{p}_{i,j,k}+\varepsilon > \E_\Q\left[v_{j,k}(S_{t_{j}})~\middle|~\mathcal{F}_{t_{i}}\right] 
\end{equation}
on some Borel-measurable set $A \subset \Omega$ with $\Q(A)>0$.

\item[(c)]
For $\varepsilon > 0$ such that $\mathcal{M}_{V}\left(\Xi_{(\underline{p}_{i,j,k},\overline{p}_{i,j,k}-\varepsilon)},\mu_1,\dots,\mu_n\right) \neq \emptyset$ we have that
\[
\operatorname{P}_{\Xi_{(\underline{p}_{i,j,k},\overline{p}_{i,j,k}-\varepsilon)}}(\Phi)<\operatorname{P}_{\Xi_{(\underline{p}_{i,j,k},\overline{p}_{i,j,k})}}(\Phi)
\]
if and only if  for all $\Q \in \widehat{\mathcal{M}}_{V}\left(\Xi_{(\underline{p}_{i,j,k},\overline{p}_{i,j,k})},\mu_1,\dots,\mu_n\right) $ there exist $i,j \in \{1,\dots,n\},k\in \{1,\dots,N\}$ such that
\begin{equation}\label{eq_thm_condition4}
\overline{p}_{i,j,k}-\varepsilon <  \E_\Q\left[v_{j,k}(S_{t_{j}})~\middle|~\mathcal{F}_{t_{i}}\right] 
\end{equation}
on some Borel-measurable set $A \subset \Omega$ with $\Q(A)>0$.
\end{itemize}
\end{thm}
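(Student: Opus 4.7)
The plan is to prove all three equivalences simultaneously by the same bookkeeping argument, based on three ingredients: (i)~the martingale identification $p_{i,j,k}=\E_\Q[v_{j,k}(S_{t_j})\mid\mathcal{F}_{t_i}]$ $\Q$-a.s.\ that holds for every $\Q\in\mathcal{M}_V(\cdot,\mu_1,\dots,\mu_n)$ by \eqref{eq_pti_martingale}; (ii)~the monotonicity $\mathcal{M}_V(\Xi',\mu_1,\dots,\mu_n)\subseteq\mathcal{M}_V(\Xi,\mu_1,\dots,\mu_n)$ whenever $\Xi'\subseteq\Xi$; and (iii)~the primal attainment and duality supplied by Theorem~\ref{thm_1}(b). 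The three parts differ only in which pair of constraint sets is being compared.

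First I would dispatch part (a). For the \emph{if} direction, assume every $\Q\in\widehat{\mathcal{M}}_V(\Omega,\mu_1,\dots,\mu_n)$ violates the bounds on a Borel set of positive $\Q$-measure, and suppose for contradiction that $\operatorname{P}_\Xi(\Phi)=\operatorname{P}_\Omega(\Phi)$ (the inequality $\leq$ being automatic since $\Xi\subseteq\Omega$). Theorem~\ref{thm_1}(b) produces an optimizer $\Q^\star\in\mathcal{M}_V(\Xi,\mu_1,\dots,\mu_n)$, and the value equality places $\Q^\star$ in $\widehat{\mathcal{M}}_V(\Omega,\mu_1,\dots,\mu_n)$ as well; then $\Q^\star(\Xi)=1$ combined with \eqref{eq_pti_martingale} forces $\E_{\Q^\star}[v_{j,k}(S_{t_j})\mid\mathcal{F}_{t_i}]\in[\underline{p}_{i,j,k},\overline{p}_{i,j,k}]$ $\Q^\star$-a.s.\ for every $i,j,k$, contradicting the hypothesis. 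For the \emph{only if} direction, given any $\Q\in\widehat{\mathcal{M}}_V(\Omega,\mu_1,\dots,\mu_n)$ whose conditional expectations lie in the prescribed intervals a.s., \eqref{eq_pti_martingale} together with Assumption~\ref{asu_form_xi} yields $\Q(\Xi)=1$, so $\Q\in\mathcal{M}_V(\Xi,\mu_1,\dots,\mu_n)$; since $\Phi$ depends only on $s$, we obtain $\operatorname{P}_\Xi(\Phi)\geq\int\Phi\,\D\Q=\operatorname{P}_\Omega(\Phi)$, contradicting strictness.

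Parts (b) and (c) proceed by the identical template with the pair $(\Omega,\Xi)$ replaced by $(\Xi_{(\underline{p}_{i,j,k},\overline{p}_{i,j,k})},\Xi_{(\underline{p}_{i,j,k}+\varepsilon,\overline{p}_{i,j,k})})$ and $(\Xi_{(\underline{p}_{i,j,k},\overline{p}_{i,j,k})},\Xi_{(\underline{p}_{i,j,k},\overline{p}_{i,j,k}-\varepsilon)})$, respectively. In each case, the \emph{if} direction takes an optimizer $\Q^{\star\star}$ of the tighter problem (obtained from Theorem~\ref{thm_1}(b)), observes that it is automatically feasible and optimal for the looser problem, and hence lies in $\widehat{\mathcal{M}}_{V}(\Xi_{(\underline{p}_{i,j,k},\overline{p}_{i,j,k})},\mu_1,\dots,\mu_n)$ while satisfying the tightened bound a.s.\ by \eqref{eq_pti_martingale}, contradicting the violation hypothesis. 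The \emph{only if} direction starts from an optimizer of the looser problem whose conditional expectations already respect the tightened bound and, via \eqref{eq_pti_martingale}, exhibits it as a feasible point for the tighter problem with the same objective value, contradicting the strict drop.

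The main technical subtlety I foresee is invoking Theorem~\ref{thm_1}(b) on the perturbed sets $\Xi_{(\underline{p}_{i,j,k}+\varepsilon,\overline{p}_{i,j,k})}$ and $\Xi_{(\underline{p}_{i,j,k},\overline{p}_{i,j,k}-\varepsilon)}$ to secure the attainment used above: closedness is preserved under the additive shift provided $\underline{p}_{i,j,k}$ is lower semicontinuous and $\overline{p}_{i,j,k}$ is upper semicontinuous, which in particular covers every explicit example in Remark~\ref{rem_xi}. Should one wish to allow merely Borel-measurable $\underline{p}_{i,j,k},\overline{p}_{i,j,k}$, attainment on the perturbed set can be replaced by a maximizing-sequence argument: extract a weak cluster point using the $\sigma(\mathcal{P}_{\operatorname{lin},S},C_{\operatorname{lin},S})$-compactness of $\mathcal{M}_V(\Xi_{(\underline{p}_{i,j,k},\overline{p}_{i,j,k})},\mu_1,\dots,\mu_n)$ from Theorem~\ref{thm_1}(b) and pass the pointwise bound on $p_{i,j,k}$ to the limit through the integral characterization \eqref{eq_martingale_char_M_V}.
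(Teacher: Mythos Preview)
Your proposal is correct and follows essentially the same approach as the paper's proof: both directions in each part rest on primal attainment from Theorem~\ref{thm_1}(b), the inclusion $\mathcal{M}_V(\Xi',\ldots)\subseteq\mathcal{M}_V(\Xi,\ldots)$ for $\Xi'\subseteq\Xi$, and the identification $p_{i,j,k}=\E_\Q[v_{j,k}(S_{t_j})\mid\mathcal{F}_{t_i}]$ from \eqref{eq_pti_martingale}. The paper presents the ``if'' direction of (a) by first isolating the claim $\widehat{\mathcal{M}}_V(\Omega,\ldots)\cap\widehat{\mathcal{M}}_V(\Xi,\ldots)=\emptyset$ via an explicit integral inequality on the set $\widetilde{A}$, whereas you fold this into a single contradiction using \eqref{eq_pti_martingale} directly; this is a cosmetic rearrangement, not a different idea. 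Your remark about closedness of the perturbed sets $\Xi_{(\underline{p}_{i,j,k}+\varepsilon,\overline{p}_{i,j,k})}$ is apt and is handled in the paper by the sentence immediately following the theorem statement, asserting that Theorem~\ref{thm_1}(b) remains applicable there.
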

We remark that, in particular, Theorem~\ref{thm_1}~(b) is applicable to the sets $\Xi_{(\underline{p}_{i,j,k}+\varepsilon,\overline{p}_{i,j,k})}$ and $\Xi_{(\underline{p}_{i,j,k},\overline{p}_{i,j,k}-\varepsilon)}$, respectively. This allows to implement the associated semi-static strategies and to exploit potentially mispriced derivatives. We further highlight that the $\varepsilon$-pertubation in the sets $\Xi_{(\underline{p}_{i,j,k}+\varepsilon,\overline{p}_{i,j,k})}$ and $\Xi_{(\underline{p}_{i,j,k},\overline{p}_{i,j,k}-\varepsilon)}$, respectively, was introduced to study the effect on price bounds, but not to simplify the numerical simulations, as it was done similarly for example in \cite{guo2019computational} to relax the associated martingale constraint.
\subsection{Infinitely many European call options}\label{sec_n_infinity}

If we do not restrict the set of options available for dynamic trading to a fixed \textit{finite} amount of European options, but a priori consider infinitely many call options with  a continuous range of strikes reaching from $0$ to $+\infty$, then, by following the rationale from \cite{carr2001towards}, each positive European payoff can  be replicated by call options and thus every European payoff which only depends on a single value of the underlying security can be considered as being available for dynamic trading. Hence, from now on, we allow for dynamic trading in all options with payoff $v_j(S_{t_j})$ for  $v_j \in \widetilde{V} \subseteq C_{\operatorname{lin}}(\R_+,\R_+)$ and $j \in \{1,\dots,n\}$, where $\widetilde{V}$ is possibly infinite, indexed by $\mathcal{I}_{\widetilde{V}}$.

Similar to Remark~\ref{rem_thm_1} we consider the following formulation of a super-hedging problem. Given a Borel-measurable set $\widetilde{\Xi} \subset \R^n_+$ and some sets $\widetilde{\mathfrak{F}}_{i,j,k}\subset C_{\operatorname{lin}}\left(\R_+^i,\R_+\right)$ for functions $v_{j,k} \in\widetilde{V} \subseteq C_{\operatorname{lin}}(\R_+,\R_+)$, $i,j=1,\dots,n$, $k \in \mathcal{I}_{\widetilde{V}}$ we define
\begin{equation}\label{eq_infinite_prob_q}
\begin{aligned}
\widetilde{\operatorname{D}}_{(\widetilde{\mathfrak{F}}_{i,j,k})}(\Phi):=\inf_{\substack{u_i \in C_{\operatorname{lin}}(\R_+,\R_+) \\ H_i,H_{i,j,k} \in C_b(\R_+^i):\\ \eqref{eq_finitely_many_nonzero} \text{ holds}}} \bigg\{ \sum_{i=1}^n \int_{\R_+} u_i(s_i)\D \mu_i(s_i) ~\bigg|~\sum_{i=1}^n u_i(s_i)&+\sum_{i=1}^{n-1}H_i(s_1,\dots,s_i) (s_{{i+1}}-s_{i})\\
&\hspace{-6cm}+\sum_{i=1}^{n-1}\sum_{j=i+1}^{n} \sum_{k \in \mathcal{I}_{\widetilde{V}}} H_{i,j,k}(s_{1},\dots,s_{i})\left(v_{j,k}(s_{j})-f_{i,j,k}(s_1,\dots,s_i)\right) \geq \Phi(s) \\
&\hspace{1cm}\text{for all } s \in \widetilde{\Xi}, f_{i,j,k} \in \widetilde{\mathfrak{F}}_{i,j,k})\bigg\},
\end{aligned}
\end{equation}
where
\begin{equation}\label{eq_finitely_many_nonzero}
\text{for all } i,j:~ H_{i,j,k},~k \in \mathcal{I}_{\widetilde{V}},\text{ are equal to zero, up to finitely many } k.
\end{equation}
This means $\widetilde{\operatorname{D}}_{(\mathfrak{F}_{i,j,k})}$ corresponds to the minimal super-replication price among strategies where dynamic trading in all options $v_{j,k} \in {\widetilde{V}} \subseteq C_{\operatorname{lin}}(\R_+,\R_+)$ is allowed at time $t_i$. The time $t_i$-price of this option is associated to some $f_{i,j,k} \in \widetilde{\mathfrak{F}}_{i,j,k}$ which is unknown for the financial agent. Thus, the considered strategies super-replicate $\Phi$ pointwise on $\widetilde{\Xi}$ and among  all potential prices $f_{i,j,k}$ in $\widetilde{\mathfrak{F}}_{i,j,k}$.

We obtain the following duality result that allows an interpretation of the super-hedging problem as a maximization problem of expected values of $\Phi$ w.r.t. martingale measures $\Q$ s.t. the $\mathcal{F}_{t_i}$-conditional expectations of $v_{j,k}\in C_{\operatorname{lin}}(\R_+,\R_+)$ can be written in terms of some function $f_{i,j,k}$ from $\widetilde{\mathfrak{F}}_{i,j,k}$.
\begin{thm}\label{thm_infinity_N}
Let $\Phi \in C_{\operatorname{lin}}\left(\R_+^n,\R_+\right)$, let $\widetilde{\Xi} \subset \R_+^n$ be compact, and let each $(\widetilde{\mathfrak{F}}_{i,j,k})_{k \in  \mathcal{I}_{\widetilde{V}}, \atop i,j
\in \{1,\dots,n\}}   \subset  C_{\operatorname{lin}}\left(\R_+,\R_+\right)$ satisfy for all compact $\K \subset \R_+^n$ that $\widetilde{\mathfrak{F}}_{i,j,k}|_\K $ is compact in the uniform topology on $C(\K)$ and such that for all $i,j \in \{1,\dots,n\}$, $k \in  \mathcal{I}_{\widetilde{V}}$
\begin{equation}
\label{eq_linear_growth_N_inf}
\sup_{f_{i,j,k}\in \widetilde{\mathfrak{F}}_{i,j,k} \atop {(s_1,\dots,s_i)\in \R_+^i}}\frac{f_{i,j,k}(s_1,\dots,s_i)}{1+\sum_{\ell=1}^is_\ell}<\infty.
\end{equation}
If the set
\begin{align*}
\widetilde{\mathcal{Q}}_{(\widetilde{\mathfrak{F}}_{i,j,k})}=\bigg\{\Q \in \mathcal{M}(\mu_1,\dots,\mu_n):~&\Q\big(\widetilde{\Xi}\big)=1,~\text{ for all } i,j \text{ and all } v_{j,k} \in \widetilde{V} \subseteq C_{\operatorname{lin}}(\R_+,\R_+)\\
&\text{there exists } {f_{i,j,k}} \in \widetilde{\mathfrak{F}}_{i,j,k}\text{ s.t. }\E_\Q[v_{j,k}(S_{t_j})~|~ \mathcal{F}_{t_i}]={f_{i,j,k}} ~\Q\text{-a.s.} \bigg\}
\end{align*}
is non-empty, then
\[
\widetilde{\operatorname{D}}_{(\widetilde{\mathfrak{F}}_{i,j,k})}(\Phi)=\sup_{\Q \in \widetilde{\mathcal{Q}}_{(\widetilde{\mathfrak{F}}_{i,j,k})}} \int_{\R_+^n} \Phi(s)\D \Q(s).
\]
\end{thm}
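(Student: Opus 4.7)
The strategy is to reduce the infinite-options problem to its finite-options sub-problems, invoke Remark~\ref{rem_thm_1}(c) for the finite case, and then pass to the limit via a weak-compactness argument. For each finite $J\subset \mathcal{I}_{\widetilde V}$, let $\mathcal{Q}_J$ be defined exactly as $\widetilde{\mathcal{Q}}_{(\widetilde{\mathfrak{F}}_{i,j,k})}$ but imposing the conditional-expectation constraint only for $k\in J$, and let $\operatorname{D}_J(\Phi)$ and $\operatorname{P}_J(\Phi):=\sup_{\Q\in\mathcal{Q}_J}\int \Phi\D\Q$ be the corresponding dual and primal values. Restriction \eqref{eq_finitely_many_nonzero} forces every admissible strategy for $\widetilde{\operatorname{D}}_{(\widetilde{\mathfrak{F}}_{i,j,k})}(\Phi)$ to use non-zero $H_{i,j,k}$ only for finitely many $k$, and conversely every finite-$J$ strategy is admissible in the infinite problem, so
\[
\widetilde{\operatorname{D}}_{(\widetilde{\mathfrak{F}}_{i,j,k})}(\Phi)=\inf_{J\text{ finite}}\operatorname{D}_J(\Phi).
\]
Because $\widetilde{\mathcal{Q}}_{(\widetilde{\mathfrak{F}}_{i,j,k})}\subseteq\mathcal{Q}_J$ is non-empty and $\widetilde\Xi$ is compact, Remark~\ref{rem_thm_1}(c) applies to each finite $J$, giving $\operatorname{D}_J(\Phi)=\operatorname{P}_J(\Phi)$ and an optimizer $\Q_J\in\mathcal{Q}_J$. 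The weak-duality inequality $\widetilde{\operatorname{D}}_{(\widetilde{\mathfrak{F}}_{i,j,k})}(\Phi)\ge \sup_{\Q\in\widetilde{\mathcal{Q}}_{(\widetilde{\mathfrak{F}}_{i,j,k})}}\int\Phi\D\Q$ is obtained by integrating the super-replication inequality against any such $\Q$ and using that the martingale and conditional-expectation terms vanish in expectation.

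For the reverse inequality I extract a cluster point from the net $(\Q_J)_J$. Each $\Q_J$ is supported on the compact set $\widetilde\Xi$, so $(\Q_J)_J$ lies in the weakly compact metrizable space $\mathcal{P}(\widetilde\Xi)$; pick a convergent subnet $\Q_{J_\alpha}\to \Q^*$. The map $J\mapsto \operatorname{P}_J(\Phi)$ is non-increasing with infimum $\widetilde{\operatorname{D}}_{(\widetilde{\mathfrak{F}}_{i,j,k})}(\Phi)$: for every $\varepsilon>0$ some $J_\varepsilon$ witnesses the infimum up to $\varepsilon$, and by the defining property of a subnet eventually $J_\alpha\supseteq J_\varepsilon$, hence $\operatorname{P}_{J_\alpha}(\Phi)\to \widetilde{\operatorname{D}}_{(\widetilde{\mathfrak{F}}_{i,j,k})}(\Phi)$. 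Since $\Phi$ is continuous and bounded on the compact $\widetilde\Xi$, weak convergence gives $\int\Phi\D\Q^*=\lim_\alpha\int\Phi\D\Q_{J_\alpha}=\widetilde{\operatorname{D}}_{(\widetilde{\mathfrak{F}}_{i,j,k})}(\Phi)$.

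The \emph{main obstacle} is to show that $\Q^*\in\widetilde{\mathcal{Q}}_{(\widetilde{\mathfrak{F}}_{i,j,k})}$, which I would do by verifying that each $\mathcal{Q}_J$ is weakly closed in $\mathcal{P}(\widetilde\Xi)$, so that $\Q^*\in\bigcap_J\mathcal{Q}_J=\widetilde{\mathcal{Q}}_{(\widetilde{\mathfrak{F}}_{i,j,k})}$. Concentration on $\widetilde\Xi$, the marginal conditions and the martingale property of $S$ pass to weak limits via the standard test-function characterisations (\eqref{eq_martingale_char_M_V}). The delicate part is the conditional-expectation constraint: given $\Q_n\in\mathcal{Q}_J$ with $\Q_n\to \Q^*$ and associated functions $f^n_{i,j,k}\in\widetilde{\mathfrak{F}}_{i,j,k}$ (only finitely many $(i,j,k)$ appear because $|J|<\infty$), I restrict the $f^n_{i,j,k}$ to the compact projection $\widetilde\Xi_i\subset\R_+^i$ of $\widetilde\Xi$, use the compactness of $\widetilde{\mathfrak{F}}_{i,j,k}|_{\widetilde\Xi_i}$ in the uniform topology on $C(\widetilde\Xi_i)$ together with a diagonal extraction to obtain uniform limits $f^*_{i,j,k}\in\widetilde{\mathfrak{F}}_{i,j,k}$, and then pass to the limit in
\[
\int h(s_1,\ldots,s_i)\bigl(v_{j,k}(s_j)-f^n_{i,j,k}(s_1,\ldots,s_i)\bigr)\D\Q_n(s)=0,\qquad h\in C_b(\R_+^i),
\]
by combining uniform convergence of the integrands with weak convergence of the $\Q_n$. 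This yields the same identity for $\Q^*$ with $f^*_{i,j,k}$, hence $\Q^*\in\mathcal{Q}_J$ for every finite $J$, and the proof is complete.
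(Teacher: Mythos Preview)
Your argument is correct and takes a genuinely different route from the paper. The paper repeats the biconjugate-duality scheme of Theorem~\ref{thm_1}(b) and Remark~\ref{rem_thm_1}(c) \emph{directly} on the infinite-options functional $\widetilde{\operatorname D}_{(\widetilde{\mathfrak F}_{i,j,k})}$: it computes the convex conjugate $\widetilde{\operatorname D}^*_{(\widetilde{\mathfrak F}_{i,j,k})}(\Q)$, shows (using Urysohn for the support condition and Sion's minimax theorem for the conditional-expectation term, applied one triple $(i,j,k)$ at a time) that it equals $0$ on $\widetilde{\mathcal Q}_{(\widetilde{\mathfrak F}_{i,j,k})}$ and $+\infty$ elsewhere, and then reads off the duality from \cite[Theorem~2.2]{bartl2019robust}. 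No finite-options approximation appears; the restriction \eqref{eq_finitely_many_nonzero} is used only to ensure the strategies lie in $C_{\operatorname{lin},S}$ so that the biconjugate theorem applies.

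Your approach instead black-boxes the finite case via Remark~\ref{rem_thm_1}(c), writes $\widetilde{\operatorname D}=\inf_{J}\operatorname D_J=\inf_J\operatorname P_J$, and passes to the limit by extracting a weak cluster point of the optimizers $\Q_J$ inside the compact space $\mathcal P(\widetilde\Xi)$. The key step---weak closedness of each $\mathcal Q_J$---is handled exactly as it should be, by combining compactness of $\widetilde{\mathfrak F}_{i,j,k}|_{\widetilde\Xi_i}$ in $C(\widetilde\Xi_i)$ with weak convergence of the measures. This is more modular and has the bonus that it explicitly produces a primal optimizer $\Q^*$ (the paper's route also yields one, but only implicitly through the abstract compactness statement in \cite[Theorem~2.2]{bartl2019robust}). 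The trade-off is that the paper's argument is shorter and treats the finite and infinite cases uniformly, while yours relies on the finite-case result already being available and on the compactness of $\widetilde\Xi$ in a more essential way (to keep everything inside the metrizable $\mathcal P(\widetilde\Xi)$ and to bound $\Phi$).
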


\begin{rem}\label{rem_examples_fijk}
An example of sets $\widetilde{\mathfrak{F}}_{i,j,k}$ fulfilling the assumptions of Theorem~\ref{thm_infinity_N} includes for given $i,j, \in \{1,\dots,n\}, k \in  \mathcal{I}_{\widetilde{V}}$ the sets
\begin{align*}
\widetilde{\mathfrak{F}}_{i,j,k}=\bigg\{(s_1,\dots,s_i)\mapsto g(s_i)~\bigg|~ &g \text{ being } 1\text{-Lipschitz with }g(0)=0,\text{ and}\\
&g(S_{t_i})=\E_{\Q}[v_{j,k}(S_{t_j})~|~S_{t_i}]~\Q\text{-a.s. }\text{for some } \Q \in \mathcal{M}(\mu_1,\dots,\mu_n)\bigg\}
\end{align*}
of prices following a Markovian pricing rule.
\end{rem}

\begin{rem}\label{rem_reduction_of_function_class}
The case where $\widetilde{V} \subsetneq C_{\operatorname{lin}}\left(\R_+^i,\R_+\right)$ is a strict subset (still possibly infinitely large) accounts for a possible lack in liquidity. Therefore, one possible choice for $\widetilde{V}$ includes all payoffs of call and put options for a predefined range of strikes. If $\widetilde{V}$ only contains a finite number of payoffs, then we rediscover the result discussed in Remark~\ref{rem_thm_1}~(c).
\end{rem}

\section{Examples and Numerics}\label{sec_exa_num}
\subsection{Examples}
In this section we provide several examples.\footnote{All the codes are available under \htmladdnormallink{https://github.com/juliansester/dynamic\_option\_trading}{https://github.com/juliansester/dynamic\_option\_trading}}  In particular, we compare our approach with the conventional martingale transport approach where semi-static hedging without dynamic trading in options is involved. We start with an empirical study indicating how to choose pricing rules for European call options.

%
%

\begin{exa}[Market Implied Marginals from real financial data]\label{exa_sensitivity}
We consider the marginal distributions $\mu_1$ and $\mu_2$ derived from call and put options on the stock of \emph{Apple Inc.} The data was observed at $t_0 = $ $24$ July $2020$ for $S_{t_0}=389.09$. The considered time to maturities are $t_1-t_0 = 84$ days and $t_2-t_0 = 175$ days respectively.
Due to the short maturities we neglect dividend yields as well as interest rates and discretize the resultant marginal distributions on a discrete grid with $20$ supporting values, where the discretization is performed according to the method proposed in \cite{baker2012martingales} and \cite{guo2019computational} to be able to apply the linear programming approach that is described in Algorithm~\ref{algo_mot_lp}.
We allow for dynamic trading in call options with maturity $t_2$\footnote{We only consider dynamic trading in options with maturity $t_2$ as trading in an option with maturity $t_1$ would not induce a proper dynamic trading position, since such positions are implicitly subsumed in the static component $u_1$. } and strikes $K_k$, i.e., $v_{2,k}(S_{t_2})=\left(S_{t_2}-K_k\right)^+$, where $K_1=360,K_2=340,K_3=320$. We set the standard price bounds $\underline{p}_{1,2,k}(S_{t_1}) = (S_{t_1}-K_k)^+$ and $\overline{p}_{1,2,k}(S_{t_1}) = S_{t_1}$ for $k=1,2,3$, see also Remark~\ref{rem_xi}~(b). Now, we compute numerically the quantities $\operatorname{P}_{\Xi_{(\underline{p}_{i,j,k}+\varepsilon,\overline{p}_{i,j,k})}}(\Phi)$ and $\operatorname{P}_{\Xi_{(\underline{p}_{i,j,k},\overline{p}_{i,j,k}-\varepsilon)}}(\Phi)$ for different values of $\varepsilon$ and for different payoff functions $\Phi$. Further we illustrate the differences between considering $V=\{v_{2,1}\},V=\{v_{2,1},v_{2,2}\}$ and $V=\{v_{2,1},v_{2,2},v_{2,3}\}$ respectively, i.e. we study the effect of including more options for dynamic trading.  The results, using Algorithm~\ref{algo_mot_lp}, are depicted in Figure~\ref{fig_improved_bound_p1}, where we observe the following two effects. First, for an increasing level of $\varepsilon$, the intervals $[\underline{p}_{i,j,k}+\varepsilon,\overline{p}_{i,j,k}]$ and $[\underline{p}_{i,j,k},\overline{p}_{i,j,k}-\varepsilon]$ become tighter, therefore the pricing rule is more restrictive which in turn leads to observably smaller upper price bounds. Second, being able to trade in a higher number of dynamically traded options leads to tighter price intervals.

\begin{figure}[h!]
\begin{center}
\includegraphics[scale=0.55]{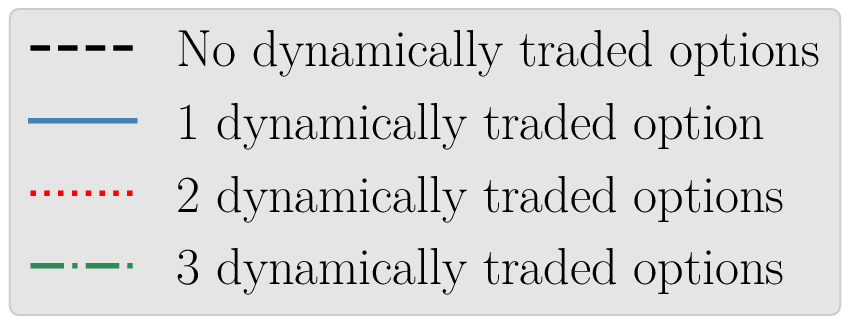}
\includegraphics[scale=0.55]{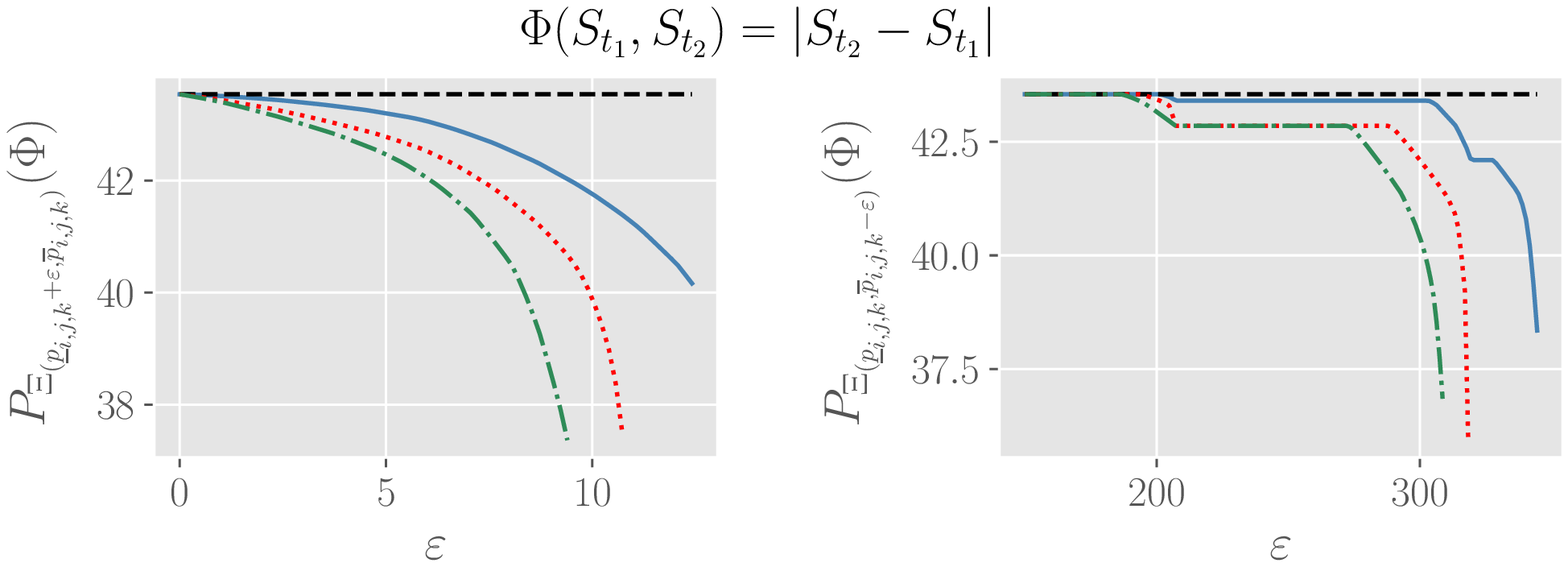}
\includegraphics[scale=0.55]{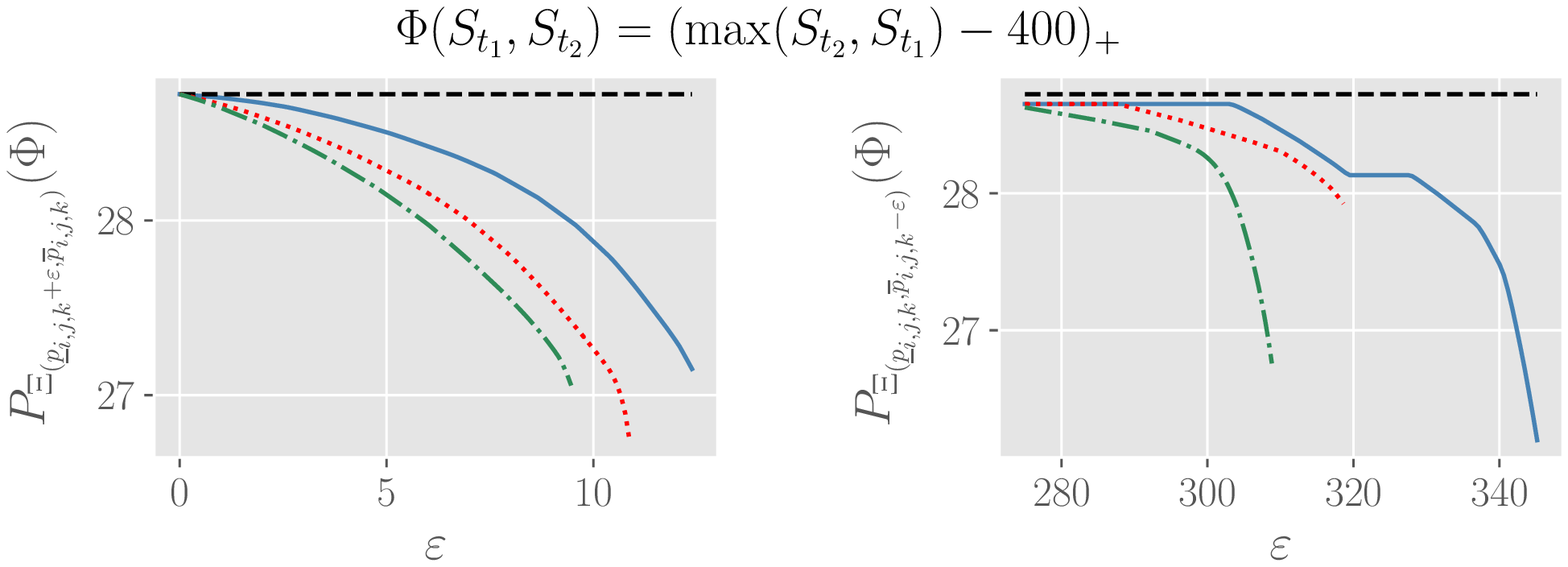}
\includegraphics[scale=0.55]{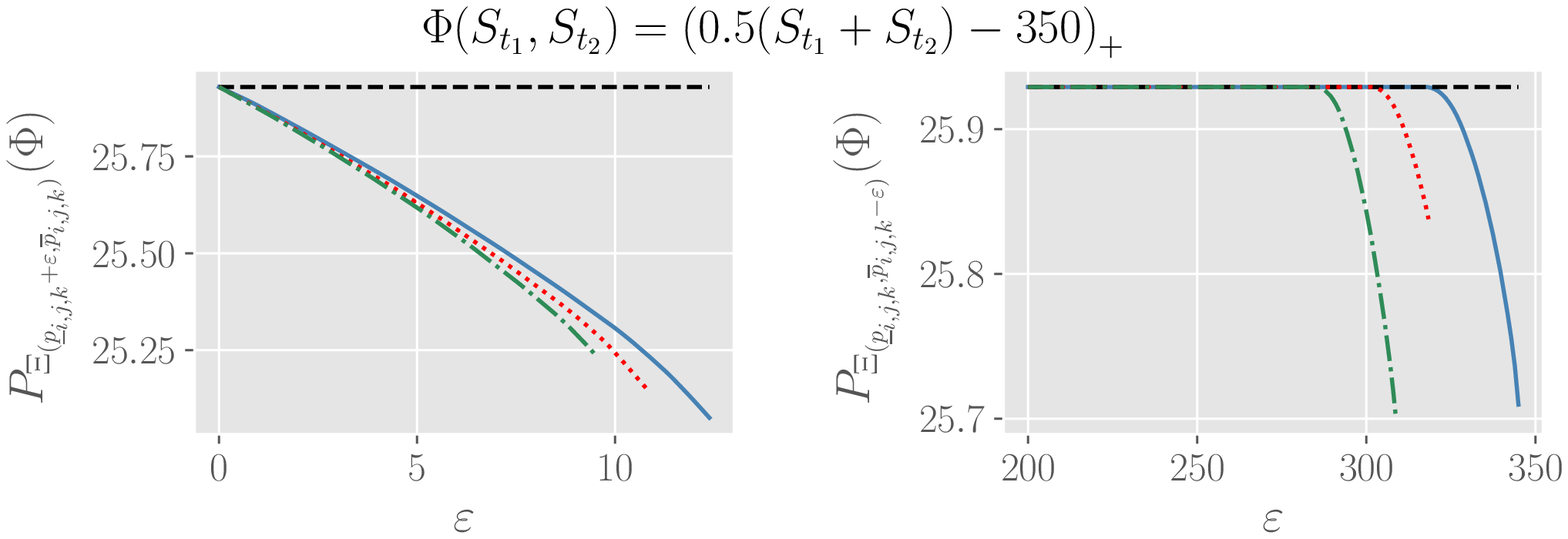}
\includegraphics[scale=0.55]{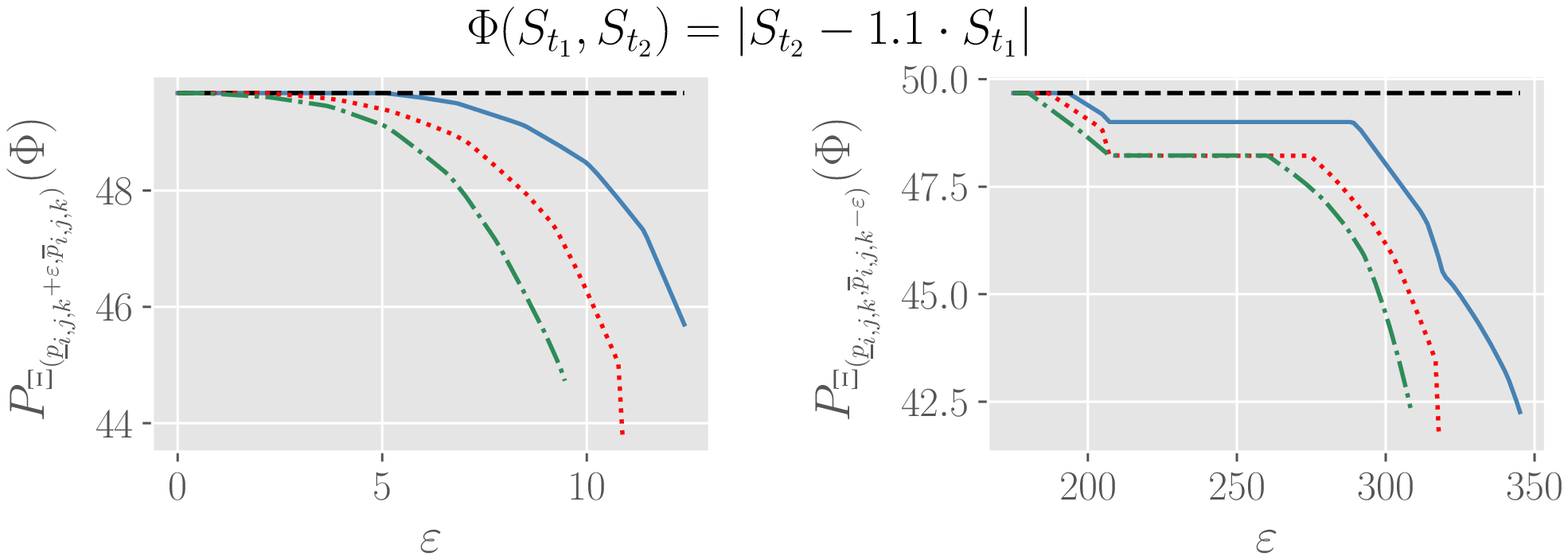}

\caption{The upper price bound for different payoff functions in dependence of a change in the bounds of the pricing rule and in dependence of a different number of considered options for dynamic trading. The price bounds without dynamic option trading (but still with semi-static trading) are indicated by a black dashed line.}\label{fig_improved_bound_p1}
\end{center}
\end{figure}

\end{exa}
\newpage
\begin{exa}[Model-Implied Pricing Rules]
We consider the same market-implied marginals and the same sets $V$ as in Example~\ref{exa_sensitivity}. Then we consider for dynamically traded vanilla options a Black-Scholes type pricing rule of the form \eqref{eq_bs_pricing} and denote for $\varepsilon,\widehat{\sigma} >0$ by 
\[
\Xi_{\widehat{\sigma}-\varepsilon,\widehat{\sigma}+\varepsilon}:=\left\{(s,p) \in \Omega~\middle|~\eqref{eq_bs_pricing}\text{ holds for } \sigma \in [\widehat{\sigma}-\varepsilon,\widehat{\sigma}+\varepsilon] \right\}
\]
the set of admissible paths under a Black-Scholes model with uncertainty in the volatility parameter. We set $\widehat{\sigma} = 0.3$ and depict in Figure~\ref{fig_bs_1} how the robust upper price bounds for several payoff functions $\Phi$ behave under dynamic option trading for a varying number of call options in dependence of $\varepsilon$, using Algorithm~\ref{algo_mot_lp}. We observe in Figure~\ref{fig_bs_1} that the upper price bound becomes smaller for a decreasing level of uncertainty w.r.t.\,the volatility, i.e., for a smaller level of $\varepsilon$. In turn, accounting for more uncertainty through a high level of $\varepsilon$ comes with the drawback of a high upper price bound. Moreover, we observe that the bound can be further decreased through the inclusion of a higher number of traded options.
\begin{figure}[h!]
\begin{center}
\hspace{2cm}\includegraphics[scale=0.55]{eps/legend_real.eps}\newline
\includegraphics[scale=0.55]{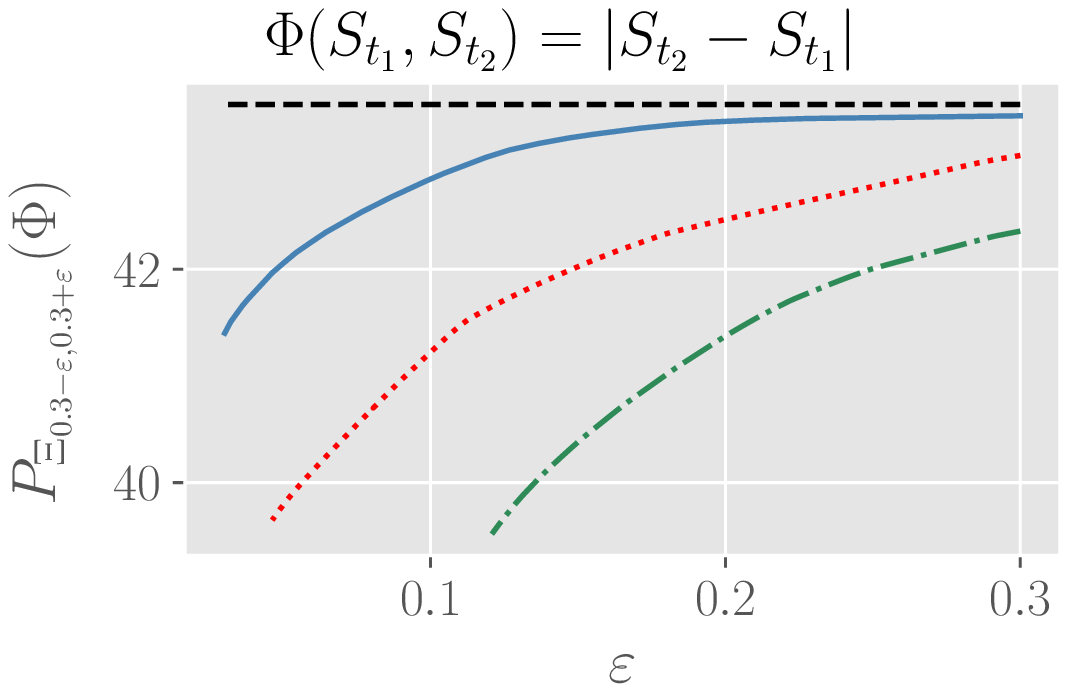}
\includegraphics[scale=0.55]{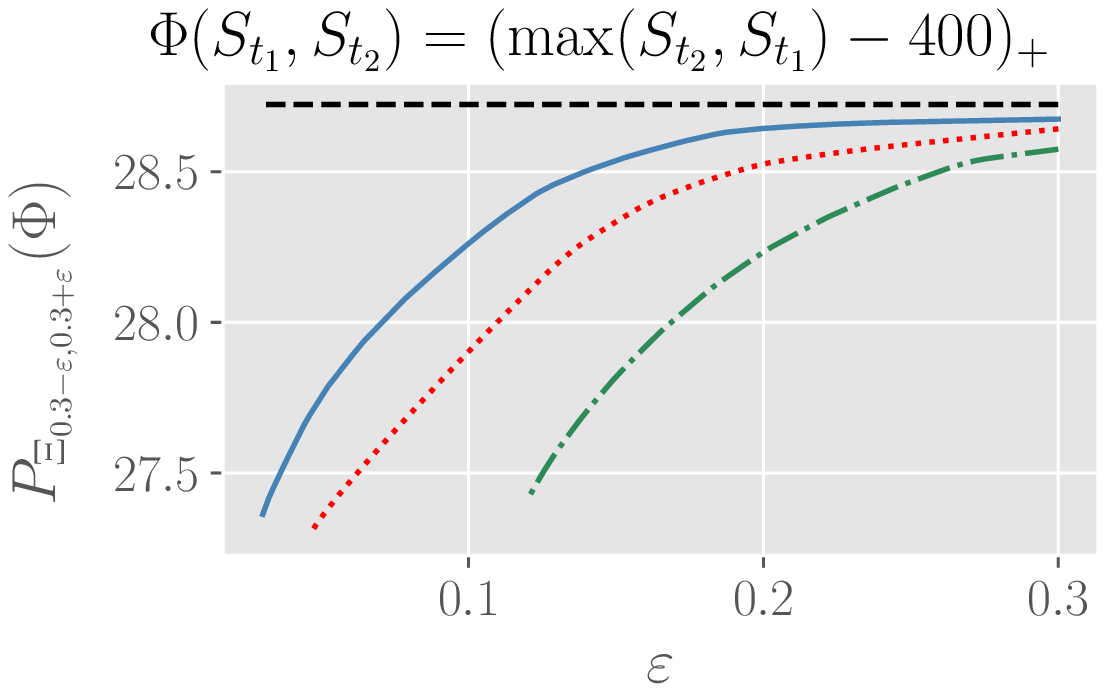}
\includegraphics[scale=0.55]{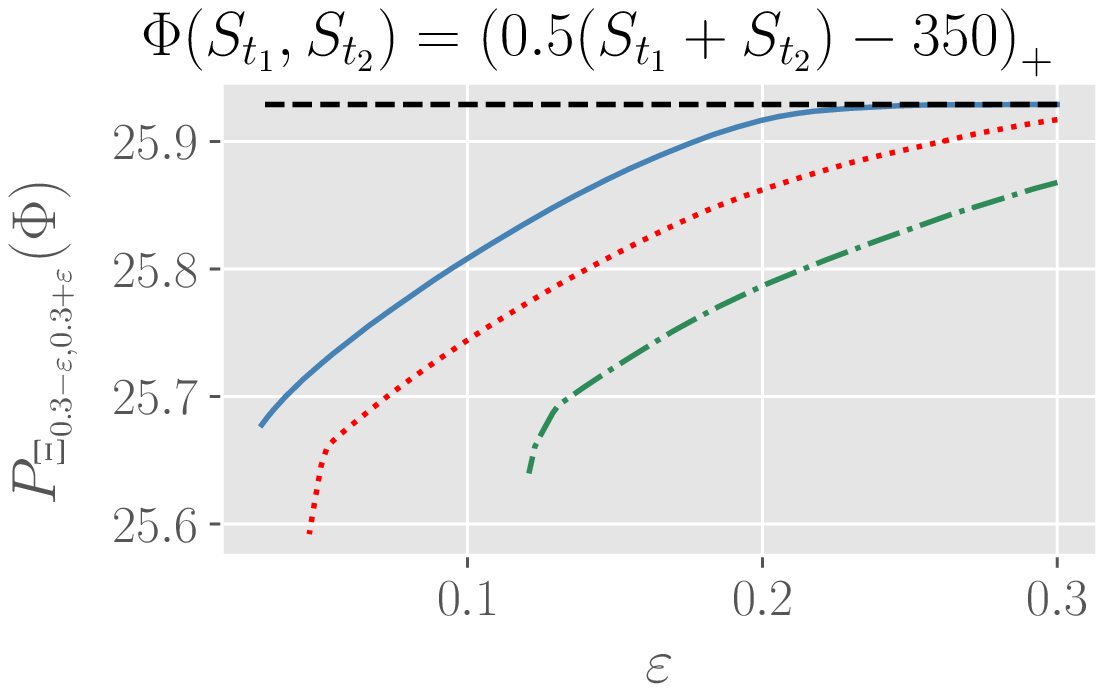}
\includegraphics[scale=0.55]{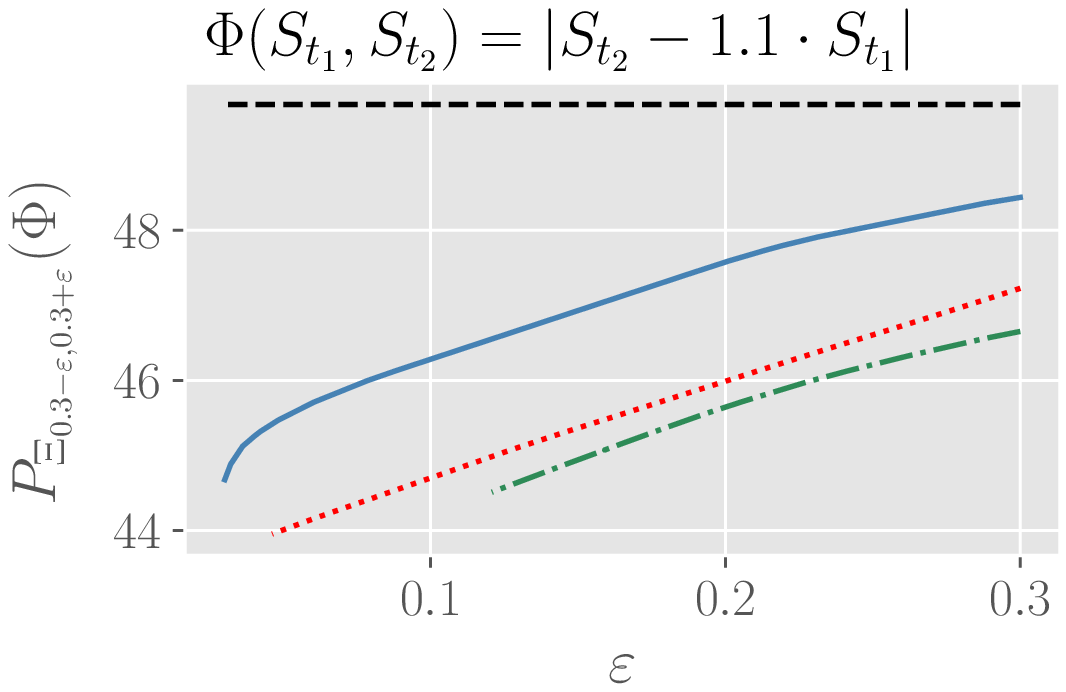}
\caption{The figure shows how the upper robust price bounds behave under a different number of traded options which are priced according to a robust Black-Scholes pricing rule as in \eqref{eq_bs_pricing} in dependence of $\varepsilon$ and for $\widehat{\sigma}=0.3$.}\label{fig_bs_1}
\end{center}
\end{figure}
\end{exa}

\begin{exa}[Three Times, Continuous Marginals]
We consider log-normally distributed marginals
\begin{align*}
&S_{t_i} \sim S_{t_0} \exp\left( \sigma \sqrt{t_i} N_i-\sigma^2 \frac{t_i}{2}\right)\text{ for } i =1,2,3,
\end{align*}
with $S_{t_0}=100$, $\sigma = 10$, $t_i = i$ and $N_i \sim \mathcal{N}(0,1)$ i.i.d. for $i=1,2,3$. The payoff function is an Asian call option of the form $\Phi(S)=\left(\frac{1}{3}\sum_{i=1}^3 S_{t_i}-100\right)^+$.
As dynamically traded options we take into account European call options $v_{2,1}=(S_{t_2}-98)^+$ and $v_{3,1}=(S_{t_3}-98)^+$. We consider as price bounds for the European options $\underline{p}_{1,l,1}(S_{t_l})=(S_{t_l}-98)^+$ and $\overline{p}_{1,l,1}(S_{t_l})=S_{t_l}$ for $l =2,3$ respectively. Then we study, using the neural networks approach which is explained in Section~\ref{sec_penalization}, how the price bounds $\operatorname{P}_{\Xi_{(\underline{p}_{i,j,k}+\varepsilon_1,\overline{p}_{i,j,k}-\varepsilon_2)}}(\Phi)$ behave for increasing $\varepsilon_1,\varepsilon_2$. The results are illustrated in Figure~\ref{fig_2_asset}, where we can observe that increasing $\varepsilon_1,\varepsilon_2$ simultaneously may lead to an even stronger improvement of the price bounds of $\Phi$ in comparison with only increasing either $\varepsilon_1$ or $\varepsilon_2$.
\begin{figure}[h!]
\includegraphics[width=0.7\textwidth]{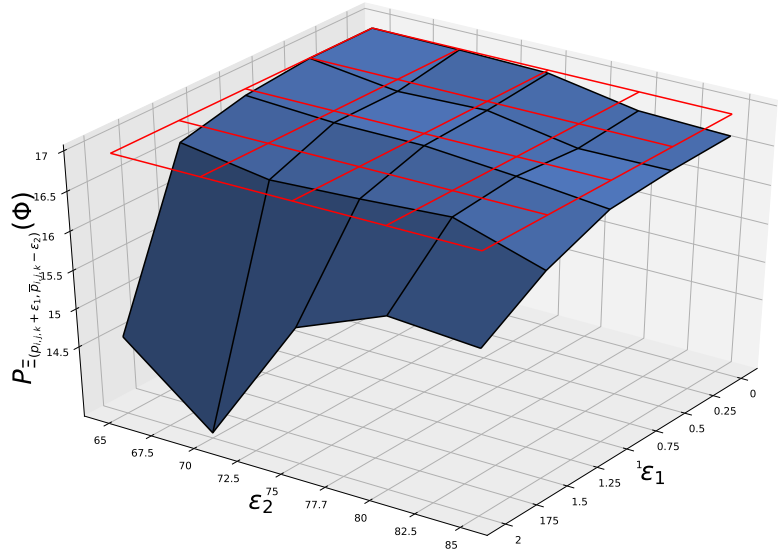}
\caption{The price bounds  $\operatorname{P}_{\Xi_{(\underline{p}_{i,j,k}+\varepsilon_1,\overline{p}_{i,j,k}-\varepsilon_2)}}(\Phi)$ of an Asian option with log-normally distributed marginal distributions and simultaneously increased price bounds of dynamically traded European options.}\label{fig_2_asset}
\end{figure}
\end{exa}

\newpage

\begin{exa}[No arbitrage bounds for call-option prices using $S \& P~500$ data]
We study prices for call options written on constituents of the $S \& P~500$ index at $10$ June $2020$.
In total we investigate $10501$ options. We study to which degree ask and bid prices deviate from the standard no-arbitrage bounds $(S_{t_0}-K)^+$ and $S_{t_0}$ respectively, see also Remark~\ref{rem_xi}~(b). As we do not consider interest rates, we only take into account those options with a short time-to-maturity. Here, we consider only options with time-to-maturity less than $60$ days. The deviation of the average of all normalized prices (in percentage) and of the $5\%$ and $95\%$-quantile of all normalized prices from the no-arbitrage bounds is illustrated in Figure~\ref{fig_histogram}.  We observe a certain amount of options with prices lower than the lower no-arbitrage bound, which can be explained through interest rates and dividend yields.

\begin{figure}[h!]
\includegraphics[scale=0.6]{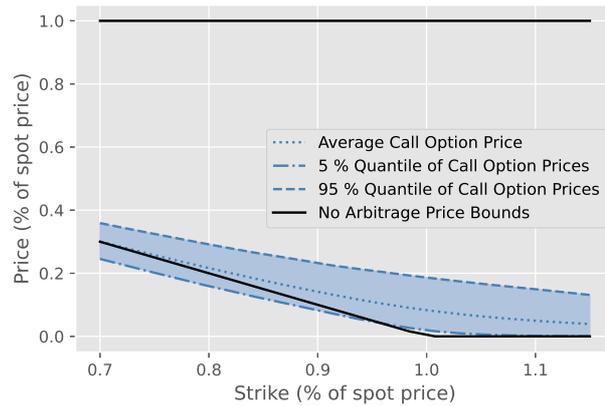}
\caption{The plot shows how prices of call options written on the $S \& P~500
$ deviate from the lower no-arbitrage bound $(S_{t_0}-K)^+$ and the upper no-arbitrage bound~$S_{t_0}$, respectively.}\label{fig_histogram}
\end{figure}

In particular, we realize that the deviation from the upper no-arbitrage bound is much larger than from the lower bound. This is because the payoff functions of call options are convex functions and thus the upper price bound $S_{t_0}$, which is the concave envelope of $(S_{t_j}-K)^+$, is relatively distant from the payoff itself, whereas the convex envelope $(S_{t_0}-K)^+$ is closer to the convex payoff function. For concave payoff functions the situation turns out to be exactly opposite, i.e., the concave envelope is closer to the payoff function than the convex envelope (that appears as a lower bound).
\end{exa}

\subsection{Numerics}\label{sec_numerics}
In this section we sketch and discuss two algorithms to solve the problem of the computation of $\operatorname{D}_\Xi(\Phi)$ and $\operatorname{P}_\Xi(\Phi)$ numerically.
\subsubsection{Linear Programming} 
Given that for all $i=1,\dots,n,~j=1,\dots,n~,k=1,\dots,N$ we have $S_{t_i} \in \{s_1^1,\dots,s_i^{n_i}\}$ and $P_{t_i}(v_{j,k}) \in \{p_{i,j,k}^1,\dots,p_{i,j,k}^{n_{i,j,k}}\}$ - which can always be achieved through a careful discretization of the underlying space\footnote{We remark first that the discretization of continuous marginal distributions needs to be performed such that the discretized marginals keep increasing in convex order, compare \cite{alfonsi2019sampling}, and second, even if the marginals are supported on a discrete grid and do not require a discretization, the price process always needs to be discretized.} - we can formulate a linear program to solve the primal problem $\operatorname{P}_\Xi(\Phi)$ as well as the dual problem $\operatorname{D}_\Xi(\Phi)$.
Here we need to remark that this linear programming approach, however, scales badly with dimensions, but on the contrary yields precise and fast results in low dimensions.

\begin{algorithm}\label{algo_mot_lp}
\SetAlgoLined
\SetKwInOut{Input}{Input}
\SetKwInOut{Output}{Output}
\Input{Marginals $\mu_1,\dots,\mu_n$; Payoff function $\Phi$; Set of dynamically tradable options $V=\{v_{j,k}\}$; Grid $\Xi_{\operatorname{grid}}^m$ as in Remark~\ref{rem_xi}~(e);}
\Output{Minimal $\sum_{i=1}^n \int u_i \D \mu_i$ such that \eqref{ineq_simplex} holds; \\ Minimal $u_i(s), H_i(s_1,\dots,s_i), H_{i,j,k}(s_1,\dots,s_i)$ such that \eqref{ineq_simplex} holds.}
Discretize marginals such that 
$\operatorname{supp}(\mu_i) \subset \Xi_{\operatorname{grid}}^m$, e.g. by the methods from \cite{baker2012martingales} and \cite{guo2019computational};\\
\For{$(s,p)\in \Xi_{\operatorname{grid}}^m$}{Add inequality constraints of the form 
\begin{equation}\label{ineq_simplex}
\begin{aligned}
\sum_{i=1}^n u_i(s)&+\sum_{i=1}^{n-1}H_i(s_1,\dots,s_i) (s_{{i+1}}-s_{i})\\
&+\sum_{i=1}^{n-1}\sum_{j=i+1}^{n}\sum_{k=1}^{N} H_{i,j,k}(s_1,\dots,s_i)\left(v_{j,k}(s_{j})-p_{i,j,k}\right) \geq \Phi(s)
\end{aligned}
\end{equation}}
Minimize 
\[
\sum_{i=1}^n \int u_i \D \mu_i= \sum_{i=1}^n \sum_{(s,p)\in\Xi_{\operatorname{grid}}^m}  u_i(s) \mu_i(\{s\})
\]
w.r.t. $u_i(s), H_i(s_1,\dots,s_i), H_{i,j,k}(s_1,\dots,s_i)$ such that the imposed inequality constraints \eqref{ineq_simplex} are fulfilled. This is possible e.g. via the simplex algorithm, compare \cite{dantzig1998linear}.\\
 \caption{Computation of $\operatorname{D}_\Xi(\Phi)$ via linear programming}
\end{algorithm}
For the computation of $\operatorname{P}_\Xi(\Phi)$, in addition to the the linear programming approach for martingale optimal transport, we obtain supplementary constraints associated to the property
${P}_{t_i}(v_{j,k}) = \E_{\Q}[v_{j,k}(S_{t_j})|\mathcal{F}_{t_i}]$
{ for all } $i=1,\dots,n$

For the computation of  $\operatorname{D}_\Xi(\Phi)$ one obtains for the hedging strategies additional terms of the form $H_{i,j,k}(s_1,\dots,s_i)(v_{j,k}(s_j)-p_{i,j,k})$ that will be considered on the grid induced by the discrete values for $S_{t_i}$ and $P_{t_i}(v_{j,k})$.

 For further details of the approach in the martingale optimal transport setting we refer to the Algorithm~\ref{algo_mot_lp} and to \cite{guo2019computational}, \cite{henry2013automated}. We highlight that Algorithm~\ref{algo_mot_lp} is in line with existing linear programming approaches that are used to solve optimal transport problems. The novelty of the presented algorithm is the adjustment to the extended sample space $\Xi$. We provide the algorithm for sake of completeness.
\subsubsection{Neural networks and penalization}\label{sec_penalization}
We explain how to adjust the approach from \cite{eckstein2019computation} to compute the price bounds involving dynamic option trading.
The adapted algorithm from \cite{eckstein2019computation} is stated in Algorithm~\ref{algo_mot_nn} for the case $p_{i,j,k} \in [\underline{p_{i,j,k}},\overline{p_{i,j,k}}]$ in which one only needs to consider values $p_{i,j,k} \in \{\underline{p_{i,j,k}},\overline{p_{i,j,k}}\}$ since these values lead to the extremal values of the super-hedging strategies. Algorithm~\ref{algo_mot_nn} varies from the approach provided in \cite{eckstein2019computation} by extending the sample space also to the prices of the dynamically traded options.
\begin{algorithm}\label{algo_mot_nn}
\SetAlgoLined
\SetKwInOut{Input}{Input}
\SetKwInOut{Output}{Output}
\Input{Marginals $\mu_1,\dots,\mu_n$; Batch size $B$;  Payoff function $\Phi$; Set of dynamically tradable options $V=\{v_{j,k}\}$; Penalization parameter $\gamma$; Price bound functions $\underline{p_{i,j,k}},\overline{p_{i,j,k}}$ for pricing rules of European options; Number of iterations $N$; Architecture of Neural Networks; Parameters for Adam optimizer;}
\Output{AverageLoss.}
 Initialize neural networks $H_i$, $H_{i,j,k}$, $u_i$, with random weights\;
 $\operatorname{iter} \gets 0$ \\
 \While{$\operatorname{iter}<N$}{
 \For{$b=1:B$}{
 \For{$i=1:n$}{
 Sample $x_i^b \sim \mu_i$\;
 \For{$j=1:n$}{
 \For{$k=1:N$}{
 Sample $p_{i,j,k}^b \sim \mathcal{U}(\{\underline{p_{i,j,k}}(x_1^b,\dots,x_i^b),\overline{p_{i,j,k}}((x_1^b,\dots,x_i^b)\})$\;
 }
 }
 }
 }
 \begin{align*} 
\operatorname{Loss}[\operatorname{iter}] \gets \frac{1}{B} \left(\sum_{p=1}^B\sum_{j=1}^n u_i(x_i^b)\right)\\
+\frac{1}{2}\gamma \frac{1}{B} \max\bigg\{\sum_{p=1}^B \bigg(&\Phi(x_1^b,\cdots,x_n^b)\\
-&\sum_{i=1}^nu_i(x_i^b)-\sum_{i=1}^{n-1}H_i(x_1^b,\dots,x_i^b)(x_{i+1}^b-x_i^b)\\
-&\sum_{i=1}^n\sum_{j=i+1}^n\sum_{k=1}^{N}H_{i,j,k}(x_1^b,\dots,x_i^b)(v_{j,k}(x_j^b)-p_{i,j,k}^b)\bigg),0\bigg\}^{2} ;
\end{align*} 
 Use Adam optimizer to minimize the weights of  $H_i$, $H_{i,j,k}$, $u_i$ w.r.t. $\operatorname{Loss}[\operatorname{iter}]$\;
 $\operatorname{iter} \gets \operatorname{iter} +1$\;
 }
$\operatorname{AverageLoss} \gets \operatorname{Loss}[0.95N : N]$;~\tcp{Average loss over the last $5\%$ of Iterations}
 \caption{Computation of $\operatorname{D}_\Xi(\Phi)$ via penalization}\label{algo_nn}
\end{algorithm}

In contrast to the linear programming approach, this algorithm scales very well with dimensions, i.e., with an increasing number of marginals and of considered underlying securities. However, the choice of the involved hyper-parameters turns out to be a rather complicated task, as it was already observed in \cite{henry2019martingale}. Within our numerical examples we decided to mainly stick to the parameters used in \cite{eckstein2019computation} and \cite{aquino2019bounds} by choosing $\gamma = 10000$, neural networks with $5$ hidden layers, $64\cdot n$ neurons and \emph{ReLu} activation functions. The batch size was $2^{10+n}$ and the optimization was performed by an Adam optimizer with standard parameters for $N= 50000$ iterations. To reduce the variance of the results we finally average over $30$ independent simulations.

\section{Proofs}\label{sec_proofs}
In this section we provide all proofs of the mathematical statements from the previous sections.

\begin{proof}[Proof of Theorem~\ref{thm_1}~(a)]
First, let $\mathcal{M}_V(\Xi,\mu_1,\dots,\mu_n)$ be non-empty. Then pick some measure ${\Q_1} \in \mathcal{M}_V(\Xi,\mu_1,\dots,\mu_n) \subset \mathcal{P}(\Omega)$. We define a measure $\Q_2 \in \mathcal{P}(\R_+^n)$ through 
\[
\Q_2 := {\Q_1} \circ S^{-1}
\]
with $S:\Omega \rightarrow \R_+^n$, $S(s,p)=s$.
The measure $\Q_2$ is contained in $\mathcal{M}(\mu_1,\dots,\mu_n)$ as martingale and marginal properties of $\Q_2$ are inherited from ${\Q_1}$. Moreover,  for all $i,j=1,\dots,n$, $k=1,\dots,N$ and all Borel-measurable sets $A \subset \R_+^i$ we have that 
\begin{align*}
&\int_\Omega \one_A(s_1,\dots,s_i)\E_{{\Q_1}}\left[v_{j,k}(S_{t_j})~\middle|~\mathcal{F}_{t_i}\right](s,p)\D {\Q_1}(s,p)\\
=&\int_\Omega \one_A(s_1,\dots,s_i)v_{j,k}(s_j)\D {\Q_1}(s,p)\\
=&\int_{\R^n_+} \one_A(s_1,\dots,s_i)v_{j,k}(s_j)\D {\Q_2}(s)\\
=&\int_{\R^n_+} \one_A(s_1,\dots,s_i)\E_{\Q_2}\left[v_{j,k}(S_{t_j})~\middle|~\mathcal{F}_{t_i}\right](s)\D {\Q_2}(s)\\
=&\int_{\Omega}\one_A(s_1,\dots,s_i)\E_{{\Q_2}}\left[v_{j,k}(S_{t_j})~\middle|~\mathcal{F}_{t_i}\right]\circ S(s,p)\D {\Q_1}(s,p).
\end{align*}
Thus, we obtain ${\Q_1}$-almost surely that
\[
\E_{{\Q_2}}\left[v_{j,k}(S_{t_j})~\middle|~\mathcal{F}_{t_i}\right]\circ S = \E_{{\Q_1}}\left[v_{j,k}(S_{t_j})~\middle|~\mathcal{F}_{t_i}\right]=\operatorname{P}_{t_i}(v_{j,k}).
\]
This implies, by using the definition of $\Q_2$, that
\begin{align*}
&\Q_2\left(\left\{s\in \R_+^n~\middle|~\left(s,\E_{{\Q_2}}\left[v_{j,k}(S_{t_j})~\middle|~\mathcal{F}_{t_i}\right](s)_{{i,j =1,\dots,n, \atop k =1,\dots,N}}\right) \in \Xi\right\}\right)\\
=&{\Q_1}\left(\left\{(s,p)\in \Omega ~\middle|~\left(S(s,p),\E_{{\Q_2}}\left[v_{j,k}(S_{t_j})~\middle|~\mathcal{F}_{t_i}\right]\circ S(s,p)_{{i,j =1,\dots,n, \atop k =1,\dots,N}}\right)\in \Xi\right\}\right)\\
=&{\Q_1}\left(\left\{(s,p)\in \Omega ~\middle|~\left(S(s,p),\operatorname{P}_{t_i}(v_{j,k})(s,p)_{{i,j =1,\dots,n, \atop k =1,\dots,N}}\right)\in \Xi\right\}\right)={\Q_1}(\Xi)=1,
\end{align*}
and thus \eqref{eq_condexp_in_p_1} is fulfilled.
Conversely, let \eqref{eq_condexp_in_p_1} be valid for some ${\Q_3} \in \mathcal{M}(\mu_1,\dots,\mu_n)\subset \mathcal{P}(\R^n_+)$.
We define a measure ${{\Q_4}}\in \mathcal{P}(\Omega)$ through
\begin{equation}\label{eq_construction_q4}
{{\Q_4}}:={\Q_3} \circ g^{-1}
\end{equation}
for 
\[
g: s \mapsto  \left(s, \left(\E_{{\Q_3}}[v_{j,k}(S_{t_j})~|~\mathcal{F}_{t_i}](s)\right)_{i,j,=1,\dots,n \atop k=1,\dots,N}\right)
\]
Then $\Q_4(\Xi)= {\Q_3}\left(\left\{s \in \R^n_+~\middle|~ g(s) \in \Xi\right\}\right)=1$ is ensured through \eqref{eq_condexp_in_p_1}, and we further have for all $i,j=1,\dots,n$, $k=1,\dots,N$ and $H \in C_b(\R_+^i)$ that
\begin{align*}
&\int_{\Omega} H(s_1,\dots,s_i)(v_{j,k}(s_j)-p_{i,j,k})\D{{\Q_4}}(s,p)\\
=&\int_{\R_+^n} H(s_1,\dots,s_i)(v_{j,k}(s_j)-\E_{{\Q_3}}[v_{j,k}(S_{t_j})~|~\mathcal{F}_{t_i}](s))\D{{\Q_3}}(s)=0.
\end{align*}
Hence $\Q_4 \in \mathcal{M}_V(\Xi,\mu_1,\dots,\mu_n)$, since the martingale and marginal constraints are inherited from $\Q_3$.
\end{proof}

\begin{proof}[Proof of Theorem~\ref{thm_1}~(b)]
We aim at applying the biconjugate duality theorem \cite[Theorem 2.2.]{bartl2019robust} to $\operatorname{D}_\Xi$. A similar proof of a martingale transport duality under additional constraints can be found in \cite{eckstein2020martingale} and \cite{ansari2020improved}. Note that, by abuse of notation,  we have $C_{\operatorname{lin}}\left(\R_+^n,\R_+\right) \subset C_{\operatorname{lin},S}$. First, we extend the domain of the super-replication functional $\operatorname{D}_\Xi(\cdot)$ from payoffs defined only on $\R_+^n$ to payoffs defined on $\Omega$  by considering $\operatorname{D}_\Xi: C_{\operatorname{lin},S} \rightarrow \R$.
Observe that $\operatorname{D}_\Xi(\cdot)$ is convex and increasing on $C_{\operatorname{lin},S}.$ Moreover, the fulfilment of condition (R1) from \cite[Theorem 2.2.]{bartl2019robust} follows analogously as in the proof of \cite[Theorem 3.3.]{eckstein2020martingale}.
An application of \cite[Theorem 2.2.]{bartl2019robust} yields
\begin{equation}\label{eq_convex_duality}
\operatorname{D}_\Xi(\Phi) = \sup_{\Q \in \mathcal{P}_{\operatorname{lin},S}}\left(\int_{\Omega} \Phi(s)\D\Q(s,p)-\operatorname{D}_\Xi^{*}(\Q)\right),
\end{equation}
where the convex conjugate $\operatorname{D}_\Xi^{*}$ of $\operatorname{D}_\Xi$ is defined through 
\[
\operatorname{D}_\Xi^{*}(\Q) = \sup_{f \in C_{\operatorname{lin},S}} \left\{\int_{\Omega} f(s,p)\D\Q(s,p)- \operatorname{D}_\Xi(f)\right\}.
\]
Moreover, by \cite[Theorem 2.2.]{bartl2019robust} we also obtain that all sublevel sets $\left\{\Q \in \mathcal{P}_{\operatorname{lin},S}~\middle|~\operatorname{D}_\Xi^{*}(\Q) \leq c\right\}, c\in \R,$ are $\sigma\left(\mathcal{P}_{\operatorname{lin},S},C_{\operatorname{lin},S}\right)$-compact.
We want to show that
\[
\operatorname{D}_\Xi^{*}(\Q) = \begin{cases}
0 &\text{if } \Q \in \mathcal{M}_{V}(\Xi,\mu_1,\dots,\mu_n), \\
\infty &\text{else.}
\end{cases}
\]
W.l.o.g. assume $\Xi \neq \Omega$, else $\Q(\Xi)=1$ is trivially satisfied.
By Urysohn's Lemma, there exist functions $(f_m)_{m \in \N} \subset C_b(\Omega) \subset C_{\operatorname{lin},S}.$ which are $0$ on $\Xi$ and converge pointwise and monotonically to $\infty \cdot \one_{\Xi^c}$ for $m \rightarrow \infty$. Thus $\operatorname{D}_\Xi(f_m) \leq 0$ for all $m\in \N$ and we obtain 
\begin{equation}\label{eq_dual_infinity}
\operatorname{D}_\Xi^{*}(\Q) \geq \sup_m  \left\{\int_{\Omega} f_m(s,p)\D\Q(s,p)- \operatorname{D}_\Xi(f_m)\right\} \geq \infty \cdot \Q(\Xi^c).
\end{equation}
Therefore $\operatorname{D}_\Xi^{*}(\Q) = \infty$ if $\Q(\Xi^c)>0$.
Assume from now on that $\Q(\Xi)=1$. Next, we compute $\operatorname{D}_\Xi^{*}(\Q)$. We first use the relation $-\inf -f = \sup f$ and obtain
\begin{align*}
\operatorname{D}_\Xi^{*}(\Q) &=\sup_{f \in C_{\operatorname{lin},S}.}\sup_{\substack{u_i\in C_{\operatorname{lin}}(\R_+,\R_+)\\ H_i,H_{i,j,k} \in C_b(\R^i):\\
\Psi^V_{(H_i),(H_{i,j,k}),(u_i)} \geq f \text{ on } \Xi }}\left\{\int_{\Omega} f(s,p)\D\Q(s,p)-\sum_{i=1}^n\int_{\R_+} u_i(s)\D\mu_i(s)\right\}.
\end{align*}
We observe that $\Psi^V_{(H_i),(H_{i,j,k}),(u_i)} \in C_{\operatorname{lin},S}$. Thus we may plug in for $f$ the strategy $\Psi^V_{(H_i),(H_{i,j,k}),(u_i)}$ to get
\begin{align*}
\operatorname{D}_\Xi^{*}(\Q) &= \sup_{u_i \in C_{\operatorname{lin}}(\R_+,\R_+)} \sum_{i=1}^n\left(\int_{\Xi} u_i(s_i)\D\Q(s,p)-\int_{\R_+} u_i(s_i)\D\mu_i(s_i) \right) \\
&+ \sup_{H_i \in C_{b}(\R_+^i)} \sum_{i=1}^{n-1}\left(\int_{\Xi} H_i(s_1,\dots,s_i)(s_{i+1}-s_i)\D\Q(s,p)\right)\\
&+\sup_{H_{i,j,k} \in C_{b}(\R_+^i)} \sum_{i=1}^{n-1}\sum_{j=i+1}^{n}\sum_{k=1}^{N}\left(\int_{\Xi} H_{i,j,k}(s_1,\dots,s_i)(v_{j,k}(s_j)-p_{i,j,k})\D\Q(s,p)\right).
\end{align*}
Then, by the characterization of $\mathcal{M}_{V}(\Xi,\mu_1,\dots,\mu_n)$ through integrals in \eqref{eq_martingale_char_M_V}, and by \eqref{eq_dual_infinity}, we see that
\[
\operatorname{D}_\Xi^{*}(\Q) = \begin{cases}
0 &\text{if } \Q \in \mathcal{M}_{V}(\Xi,\mu_1,\dots,\mu_n), \\
\infty &\text{else }.
\end{cases}
\]
Hence, we conclude from \eqref{eq_convex_duality} that
\begin{align*}
\operatorname{D}_\Xi(\Phi)&=\sup_{\Q \in \mathcal{P}_{\operatorname{lin},S}}\left(\int_{\Omega} \Phi(s)\D\Q(s,p)-\operatorname{D}_\Xi^{*}(\Q)\right)\\
&=\sup_{\Q \in\mathcal{M}_{V}(\Xi,\mu_1,\dots,\mu_n)}\int_{\Omega} \Phi(s)\D\Q(s,p)\\
&=\operatorname{P}_\Xi(\Phi).
\end{align*}
Finally, $\sigma\left(\mathcal{P}_{\operatorname{lin},S},C_{\operatorname{lin},S}\right)$-compactness of $\mathcal{M}_{V}(\Xi,\mu_1,\dots,\mu_n)$ and the attainment of the primal value follows directly from \cite[Theorem 2.2.]{bartl2019robust}, since $\mathcal{M}_{V}(\Xi,\mu_1,\dots,\mu_n)=\left\{\Q \in \mathcal{P}_{\operatorname{lin},S}~\middle|~ \operatorname{D}_\Xi^{*}(\Q) \leq 0\right\}$ is $\sigma\left(\mathcal{P}_{\operatorname{lin},S},C_{\operatorname{lin},S}\right)$-compact	and, by abuse of notation, $\Phi \in C_{\operatorname{lin}}(\R_+^n,\R_+)\subset C_{\operatorname{lin},S}$.
\end{proof}

\begin{proof}[Proof of Remark~\ref{rem_thm_1}~(a)]
W.l.o.g.\ assume that $\mathcal{M}_V(\Xi,\mu_1,\dots,\mu_n)\neq \emptyset$ which by Theorem~\ref{thm_1}~(a) is equivalent to the non-emptiness of $\{\Q \in \mathcal{M}(\mu_1,\dots,\mu_n):\eqref{eq_condexp_in_p_1}\text{ holds}\}$, else the assertion of Remark~\ref{rem_thm_1}~(a) holds trivially. 
Let $\Q_1 \in \mathcal{M}_V(\Xi,\mu_1,\dots,\mu_n)$. Then, according to the proof of Theorem~\ref{thm_1}~(a), there exists some $\Q_2 \in \mathcal{M}(\mu_1,\dots,\mu_n)$ such that \eqref{eq_condexp_in_p_1} holds and such that we further have $\int_\Omega \Phi \D \Q_1=\int_{\R_+^n}\Phi\D\Q_2$. Analogously, for each $\Q_3 \in \mathcal{M}(\mu_1,\dots,\mu_n)$ such that \eqref{eq_condexp_in_p_1} holds we can find some $\Q_4 \in \mathcal{M}_V(\Xi,\mu_1,\dots,\mu_n)$ with $\int_\Omega \Phi \D \Q_4=\int_{\R_+^n}\Phi\D\Q_3$.
\end{proof}

\begin{proof}[Proof of Remark~\ref{rem_thm_1}~(c)]
We first note that the validity of \eqref{eq_condexp_in_p_1} for $\Q \in \mathcal{M}(\mu_1,\dots,\mu_n)$ is equivalent to the fact that $\Q(\K)=1$ and that for all $i,j=1,\dots,n$, $k=1,\dots,N$ there exists some $f_{i,j,k} \in \mathfrak{F}_{i,j,k}$ such that $\E_\Q[v_{j,k}(S_{t_j})~|~S_{t_i},\dots,S_{t_1}]=f_{i,j,k}$ $\Q$-a.s.  According to Remark~\ref{rem_thm_1}~(a), this explains $\operatorname{P}_\Xi(\Phi)=\sup_{\Q \in {\mathcal{Q}}_{(\mathfrak{F}_{i,j,k})}}\int_{\R^n_+} \Phi(s) \D \Q(s)$.

Next, we see that 
\begin{equation}
\Psi^V_{(H_i),(H_{i,j,k}),(u_i)}(s,\left(f_{i,j,k}(s_1,\dots,s_i)\right)_{i,j,k}) \geq \Phi(s)\text{ for all }s \in \K, f_{i,j,k} \in \mathfrak{F}_{i,j,k}
\end{equation}
if and only if 
\begin{align}\label{eq_condition_inf}
\inf_{f_{i,j,k} \in \mathfrak{F}_{i,j,k}}\Psi^V_{(H_i),(H_{i,j,k}),(u_i)}(s,\left(f_{i,j,k}(s_1,\dots,s_i)\right)_{i,j,k}) \geq \Phi(s) \text{ for all }s \in \K
\end{align}
As in the proof of Theorem~\ref{thm_1}~(b), in equation \eqref{eq_convex_duality}, we compute the biconjugate representation of the super-replication functional $\operatorname{D}_{(\mathfrak{F}_{i,j,k})}$. To that end, we first obtain for every measure $\Q \in \mathcal{P}(\R_+^n)$ with finite first moments that its convex conjugate satisfies
\begin{align}\label{eq_biconj_proof_inf}
\operatorname{D}_{(\mathfrak{F}_{i,j,k})}^{*}(\Q)=\sup_{f \in C_{\operatorname{lin}}\left(\R_+^n,\R_+\right)}\sup_{\substack{u_i\in C_{\operatorname{lin}}(\R_+,\R_+)\\ H_i,H_{i,j,k} \in C_b(\R^i)}:\eqref{eq_condition_inf}\text{ holds}}\left\{\int_{\R_+^n} f(s)\D\Q(s)-\sum_{i=1}^n\int_{\R_+} u_i(s)\D\mu_i(s)\right\}.
\end{align}
Analogue to the proof of Theorem~\ref{thm_1}~(b) we aim at showing that 
\begin{equation}\label{eq_cases}
\operatorname{D}_{(\mathfrak{F}_{i,j,k})}^{*}(\Q) =\begin{cases}
0 &\text{if } \Q \in {\mathcal{Q}}_{(\mathfrak{F}_{i,j,k})},\\
\infty &\text{if } \Q \not\in {\mathcal{Q}}_{(\mathfrak{F}_{i,j,k})}.
\end{cases}
\end{equation}
To this end, note first that by the same arguments as in the proof of Theorem~\ref{thm_1}~(b) we obtain that $\operatorname{D}_{(\mathfrak{F}_{i,j,k})}^{*}(\Q)=\infty$ if $\Q(\K^c)>0$. Assume therefore from now on that $\Q(\K)=1$.
Next, we want to show that for all $i,j,k$ we have
\begin{equation}\label{eq_phi_in_c_lin}
\R_+^n \ni s=(s_1,\dots,s_n)\mapsto\inf_{f_{i,j,k} \in \mathfrak{F}_{i,j,k}}\Psi^V_{(H_i),(H_{i,j,k}),(u_i)}(s,\left(f_{i,j,k}(s_1,\dots,s_i)\right)_{i,j,k}) \in C_{\operatorname{lin}}\left(\R_+^n,\R_+\right).
\end{equation}For every $H_{i,j,k} \in C_b(\R^i_+)$ and every $i,j,k$ define 
\begin{align*}
g:\mathfrak{F}_{i,j,k} \times \R^i_+ &\rightarrow \R\\
\left(f_{i,j,k}, (s_1,\dots,s_i)\right)&\mapsto H_{i,j,k}(s_1,\dots,s_i)\left(v_{j,k}(s_j)-f_{i,j,k}(s_1,\dots,s_i)\right),
\end{align*}
and for any compact set $\widetilde{\K_i} \subset \R_+^i$ let $g|_{\widetilde{\K_i}}:\mathfrak{F}_{i,j,k}|_{\widetilde{\K_i}} \times \widetilde{\K_i}  \rightarrow \R$ be the restriction of $g$ onto $\mathfrak{F}_{i,j,k} \times \widetilde{\K_i} $.


Note that $g|_{\widetilde{\K_i}}$ is continuous, as for any  $\left(f_{i,j,k}^{(N)}\right)_{N \in \N}|_{\widetilde{\K_i}} \subset \mathfrak{F}_{i,j,k}|_{\widetilde{\K_i}}$ converging uniformly on $\widetilde{\K_i}$ to some $f_{i,j,k}|_{\widetilde{\K_i}}$ and $\left(s_1^{(N)},\dots,s_i^{(N)}\right)_{N \in \N} \subset \widetilde{\K_i} $ converging to some $(s_1,\dots,s_i)$ for $N \rightarrow \infty$, we have that $f^{(N)}\left(s_1^{(N)},\dots,s_i^{(N)}\right) \rightarrow f(s_1,\dots,s_n)$ for $N \rightarrow \infty$.
Moreover, since by assumption $\mathfrak{F}_{i,j,k}|_{\widetilde{\K_i}}\subseteq C(\widetilde{\K_i})$ is compact, we can apply, e.g., \cite[Proposition 7.32, p. 148]{bertsekas2004stochastic} to $g|_{\widetilde{\K_i}}$ implying the continuity of 
\begin{equation*}
\widetilde{\K_i}\ni(s_1,\dots,s_i) \mapsto \inf_{f_{i,j,k} \in \mathfrak{F}_{i,j,k}|_{\widetilde{\K_i}}} H_{i,j,k}(s_1,\dots,s_i)\left(v_{j,k}(s_j)-f_{i,j,k}(s_1,\dots,s_i)\right) \text{ on } \widetilde{\K_i}\subset \R_+^i.
\end{equation*}
Since $\widetilde{\K_i}\subset \R_+^i$ was chosen arbitrarily 
we conclude also the continuity of
\[
\R_+^i\ni(s_1,\dots,s_i) \mapsto \inf_{f_{i,j,k} \in \mathfrak{F}_{i,j,k}} H_{i,j,k}(s_1,\dots,s_i)\left(v_{j,k}(s_j)-f_{i,j,k}(s_1,\dots,s_i)\right).
\]
Further, due to \eqref{eq_cond_clin_bounded}, we obtain 
\[
\sup_{(s_1,\dots,s_i)\in \R_+^i}\frac{\inf_{f_{i,j,k} \in \mathfrak{F}_{i,j,k}}\Psi^V_{(H_i),(H_{i,j,k}),(u_i)}(s,\left(f_{i,j,k}(s_1,\dots,s_i)\right)_{i,j,k}) }{1+\sum_{\ell=1}^i s_{\ell}}<\infty.
\]
Hence, using the definition of $\Psi^V_{(H_i),(H_{i,j,k}),(u_i)}$, we conclude \eqref{eq_phi_in_c_lin}.
Due to the validity of \eqref{eq_phi_in_c_lin}, when computing \eqref{eq_biconj_proof_inf} we get for every $\Q \in \mathcal{P}(\R_+^n)$ with finite first moments and $\Q(\K)=1$ that
\begin{align*}
&\operatorname{D}_{(\mathfrak{F}_{i,j,k})}^{*}(\Q)=\sup_{u_i \in C_{\operatorname{lin}}(\R_+,\R_+)} \sum_{i=1}^n\left(\int_{\K} u_i(s_i)\D\Q(s)-\int_{\R_+} u_i(s_i)\D\mu_i(s_i) \right) \\
+ &\sup_{H_i \in C_{b}(\R_+^i)} \sum_{i=1}^{n-1}\left(\int_{\K} H_i(s_1,\dots,s_i)(s_{i+1}-s_i)\D\Q(s)\right)\\
+&\sup_{H_{i,j,k} \in C_{b}(\R_+^i)} \sum_{i=1}^{n-1}\sum_{j=i+1}^{n}\sum_{k=1}^{N}\left(\int_{\K} \inf_{f_{i,j,k} \in \mathfrak{F}_{i,j,k}} H_{i,j,k}(s_1,\dots,s_i)(v_{j,k}(s_j)-f_{i,j,k}(s_1,\dots,s_i))\D\Q(s)\right).
\end{align*}
As in the proof of Theorem~\ref{thm_1}~(b) the first two summands vanish if and only if $\Q$ fulfils the associated marginal and martingale constraints. Moreover, the last summand is greater or equal to $0$ which can be seen through setting $H_{i,j,k} \equiv 0$ and we have that
\begin{align*}
0\leq&\sup_{H_{i,j,k} \in C_{b}(\R_+^i)} \sum_{i=1}^{n-1}\sum_{j=i+1}^{n}\sum_{k=1}^{N}\left(\int_{\K} \inf_{f_{i,j,k} \in \mathfrak{F}_{i,j,k}} H_{i,j,k}(s_1,\dots,s_i)(v_{j,k}(s_j)-f_{i,j,k}(s_1,\dots,s_i))\D\Q(s)\right)\\
\leq 
&\sup_{H_{i,j,k} \in C_{b}(\R_+^i)}\inf_{f_{i,j,k} \in \mathfrak{F}_{i,j,k}} \sum_{i=1}^{n-1}\sum_{j=i+1}^{n}\sum_{k=1}^{N}\left(\int_{\K}  H_{i,j,k}(s_1,\dots,s_i)(v_{j,k}(s_j)-f_{i,j,k}(s_1,\dots,s_i))\D\Q(s)\right)
\end{align*}
which vanishes if for all $i,j=1,\dots,n,k=1,\dots,N$ there exists some $f_{i,j,k} \in \mathfrak{F}_{i,j,k}$ such that $\E_\Q[v_{j,k}(S_{t_j})~|~S_{t_i},\dots,S_{t_1}]=f_{i,j,k}$ $\Q$-a.s.,\ and can be scaled infinitely large otherwise. This shows that the conjugate $\operatorname{D}_{(\mathfrak{F}_{i,j,k})}^{*}$ satisfies $\operatorname{D}_{(\mathfrak{F}_{i,j,k})}^{*}(\Q)=0$ if $\Q \in {\mathcal{Q}}_{(\mathfrak{F}_{i,j,k})}$.
To see that $\operatorname{D}_{(\mathfrak{F}_{i,j,k})}^{*}(\Q)=\infty$ if $\Q \not\in {\mathcal{Q}}_{(\mathfrak{F}_{i,j,k})}$ pick some $\Q \not\in {\mathcal{Q}}_{(\mathfrak{F}_{i,j,k})}$. From the arguments above we can assume w.l.o.g.\,that $\Q(\K)=1$ and that $\Q$ fulfills the marginal and martingale constraints. Then, by definition of $\Xi$ there exist some $i,j,k$ such that $\nexists f_{i,j,k} \in  \mathfrak{F}_{i,j,k}$ which satsifies
\begin{equation}\label{eq_proof_board_eq_00}
\E_\Q[v_{j,k}(S_{t_j})~|~ \mathcal{F}_{t_i}]=f_{i,j,k}~\Q \text{ - a.s.}
\end{equation}
Now note that for all $i,j,k$ and all $f_{i,j,k} \in  \mathfrak{F}_{i,j,k}$ we have
\begin{equation}\label{eq_proof_board_eq_0}
\sup_{H_{i,j,k} \in C_{b}(\R_+^i)} \int_{\K} H_{i,j,k}(s_1,\dots,s_i)(v_{j,k}(s_j)-f_{i,j,k}(s_1,\dots,s_i))\D\Q(s)= \begin{cases} 0 &\text{ if } \eqref{eq_proof_board_eq_00}\\
\infty &\text{ else}.
\end{cases}
\end{equation}
Therefore, since $\Q\not\in {\mathcal{Q}}_{(\mathfrak{F}_{i,j,k})}$ but satisfies the marginal and martingale constraints, there exists $i,j,k$ such that 
\begin{equation}\label{eq_proof_board_eq_1}
\inf_{f_{i,j,k} \in \mathfrak{F}_{i,j,k}}\sup_{H_{i,j,k} \in C_{b}(\R_+^i)} \int_{\K} H_{i,j,k}(s_1,\dots,s_i)(v_{j,k}(s_j)-f_{i,j,k}(s_1,\dots,s_i))\D\Q(s)=\infty.
\end{equation}
Now note that by the choice of $\Q$ we have
\begin{align}
\operatorname{D}_{(\mathfrak{F}_{i,j,k})}^{*}(\Q)&=\sup_{H_{i,j,k} \in C_{b}(\R_+^i)} \sum_{i=1}^{n-1}\sum_{j=i+1}^{n}\sum_{k=1}^{N}\int_{\K} \inf_{f_{i,j,k} \in \mathfrak{F}_{i,j,k}} H_{i,j,k}(s_1,\dots,s_i)(v_{j,k}(s_j)-f_{i,j,k}(s_1,\dots,s_i))\D\Q(s)\notag \\
&\hspace{0cm}=\sum_{i=1}^{n-1}\sum_{j=i+1}^{n}\sum_{k=1}^{N}\sup_{H_{i,j,k} \in C_{b}(\R_+^i)} \int_{\K} \inf_{f_{i,j,k} \in \mathfrak{F}_{i,j,k}} H_{i,j,k}(s_1,\dots,s_i)(v_{j,k}(s_j)-f_{i,j,k}(s_1,\dots,s_i))\D\Q(s)\notag \\
&\hspace{0cm}= \sum_{i=1}^{n-1}\sum_{j=i+1}^{n}\sum_{k=1}^{N}\sup_{H_{i,j,k} \in C_{b}(\K_i)}\int_{\K} \inf_{f_{i,j,k}|_{\K^i} \in \mathfrak{F}_{i,j,k}} H_{i,j,k}(s_1,\dots,s_i)(v_{j,k}(s_j)-f_{i,j,k}(s_1,\dots,s_i))\D\Q(s)
\label{eq_proof_board_eq_2},
\end{align}
where $\K_i \subset \R^i_+$ denotes the projection of $\K \subset \R^n_+$ onto the first $i$ components. Next, we have that as $\mathfrak{F}_{i,j,k}|_{\K_i} \subset C(\K_i)$ is compact and since for every $H_{i,j,k} \subset C_b(\K_i)$ the map
\[
C(\K_i)\supset \mathfrak{F}_{i,j,k}|_{\K_i} \ni f_{i,j,k} \mapsto \int_{\K}  H_{i,j,k}(s_1,\dots,s_i)(v_{j,k}(s_j)-f_{i,j,k}(s_1,\dots,s_i))\D\Q(s)\
\]
is continuous, we obtain that 
\begin{equation}\label{eq_proof_board_eq_3}
\operatorname{D}_{(\mathfrak{F}_{i,j,k})}^{*}(\Q)= \sum_{i=1}^{n-1}\sum_{j=i+1}^{n}\sum_{k=1}^{N}\sup_{H_{i,j,k}\in C_{b}(\K_i)} \inf_{f_{i,j,k}|_{\K^i} \in \mathfrak{F}_{i,j,k}} \int_{\K}  H_{i,j,k}(s_1,\dots,s_i)(v_{j,k}(s_j)-f_{i,j,k}(s_1,\dots,s_i))\D\Q(s).
\end{equation}
For each $i,j,k$ consider the map
\begin{equation}\label{eq_minimiax}
C(\K_i) \times \mathfrak{F}_{i,j,k}|_{\K_i} \ni (H_{i,j,k},f_{i,j,k}) \mapsto \int_{\K} H_{i,j,k}(s_1,\dots,s_i)(v_{j,k}(s_j)-f_{i,j,k}(s_1,\dots,s_i))\D\Q(s).
\end{equation}
By a minimax theorem, see, e.g. \cite[Theorem 4.2']{sion1958general}, applied to the map defined in \eqref{eq_minimiax}, we obtain from \eqref{eq_proof_board_eq_3} that
\begin{equation}\label{eq_proof_board_eq_4}
\operatorname{D}_{(\mathfrak{F}_{i,j,k})}^{*}(\Q)= \sum_{i=1}^{n-1}\sum_{j=i+1}^{n}\sum_{k=1}^{N}\inf_{f_{i,j,k}|_{\K^i} \in \mathfrak{F}_{i,j,k}} \sup_{H_{i,j,k}  \in C_{b}(\K_i)}\int_{\K}  H_{i,j,k}(s_1,\dots,s_i)(v_{j,k}(s_j)-f_{i,j,k}(s_1,\dots,s_i))\D\Q(s).
\end{equation}
Combining \eqref{eq_proof_board_eq_0} and \eqref{eq_proof_board_eq_1} we conclude that $\operatorname{D}_{(\mathfrak{F}_{i,j,k})}^{*}(\Q)=\infty$ as desired. This proves that the conjugate $\operatorname{D}_{(\mathfrak{F}_{i,j,k})}^{*}$ satisfies \eqref{eq_cases}.
\end{proof}

\begin{proof}[Proof of Remark~\ref{rem_eq_mot_with_and_without_constraints}]
Consider some super-replication strategy $\Psi^V_{(H_i),(H_{i,j,k}),(u_i)}$ such that 
\[
\Psi^V_{(H_i),(H_{i,j,k}),(u_i)}(s,p) \geq \Phi(s) \text{ for all }(s,p) \in \Xi.
\]
Then by \eqref{eq_sp_in_xi} assumed on $\Xi$ we have
directly
$\Psi^V_{(H_i),(H_{i,j,k}),(u_i)}(s,\widetilde{p}) \geq \Phi(s)$ for all $s \in \R_+^n $
and for the particular choice of $\widetilde{p}\in \R^{nN}$ with $\widetilde{p}_{i,j,k}=v_{j,k}(s_j)$. Moreover, since 
\[
\sum_{i=1}^{n-1}\sum_{j=i+1}^{n}\sum_{k=1}^{N} H_{i,j,k}(s_{1},\dots,s_{i})\left(v_{j,k}(s_{j})-\widetilde{p}_{i,j,k}\right) = 0,
\]
this implies that
 \begin{equation*}
\begin{aligned}
\Psi_{(H_i),(u_i)}(s):=\sum_{i=1}^n u_i(s_i)&+\sum_{i=1}^{n-1}H_i(s_1,\dots,s_i) (s_{{i+1}}-s_{i})\geq \Phi(s) \text{ for all } s \in \R_+^n.
\end{aligned}
\end{equation*}
Hence, we obtain
\begin{align*}
\operatorname{D}_{\Xi}(\Phi)&= \inf_{\substack{u_i\in C_{\operatorname{lin}}(\R_+,\R_+)\\ H_i,H_{i,j,k} \in C_b(\R^i):\\
 \Psi^V_{(H_i),(H_{i,j,k}),(u_i)}\geq \Phi}}\sum_{i=1}^n \int u_i(s_i)\D \mu_i(s_i)\\
 &\geq \inf_{\substack{u_i\in C_{\operatorname{lin}}(\R_+,\R_+)\\ H_i\in C_b(\R^i):\\
 \Psi_{(H_i),(u_i)} \geq \Phi}}\sum_{i=1}^n \int u_i(s_i)\D \mu_i(s_i) \\
 &= \sup_{\Q \in \mathcal{M}(\mu_1,\dots,\mu_n)}\int_{\R_+^n}\Phi(s)\D\Q(s)
\end{align*}
where the last equality is the martingale optimal transport duality from \cite[Corollary 1.2.]{beiglbock2013model}. The reverse inequality follows immediately by definition.
\end{proof}

\begin{proof}[Proof of Theorem~\ref{thm_critical_points}]
We first prove the assertion from (a). W.l.o.g. assume \eqref{eq_thm_condition1} holds true, as the case \eqref{eq_thm_condition2} can be argued analogously. First we claim that
\[
\widehat{\mathcal{M}}_V(\Omega,\mu_1,\dots,\mu_n) \cap \widehat{\mathcal{M}}_V(\Xi_{(\underline{p}_{i,j,k},\overline{p}_{i,j,k})},\mu_1,\dots,\mu_n)=\emptyset.
\]
Assume by contradiction that there exists $\Q \in \widehat{\mathcal{M}}_V(\Omega,\mu_1,\dots,\mu_n)\cap \widehat{\mathcal{M}}_V(\Xi_{(\underline{p}_{i,j,k},\overline{p}_{i,j,k})},\mu_1,\dots,\mu_n) $. In particular, we have for this $\Q$ that $\Q\left(\operatorname{P}_{t_i}(v_{j,k})\in [\underline{p}_{i,j,k},\overline{p}_{i,j,k}]\right)=1$ for all $i,j \in \{1,\dots,n\},k\in \{1,\dots,N\}$. Set $\widetilde{A}= A \cap \{(s,p) \in \Omega~|~p_{i,j,k} \in [\underline{p}_{i,j,k},\overline{p}_{i,j,k}]\}$ where $i,j,k$ are the indices and $A$ is the set corresponding to \eqref{eq_thm_condition1}. Then, by validity of \eqref{eq_thm_condition1}, we obtain the following inequality
$$
\int_{\widetilde{A}}(v_{j,k}(s_j)-p_{i,j,k})\D\Q(s,p)\leq \int_{\widetilde{A}}(v_{j,k}(s_j)-\underline{p}_{i,j,k}(s_1,\dots,s_i))\D\Q(s,p)<0
$$
which contradicts the definition of $\operatorname{P}_{t_i}(v_{j,k})$ which coincides $\Q$-a.s. with the $\mathcal{F}_{t_i}$-conditional expectation of $v_{j,k}(S_{t_j})$.
Thus 
$$
\widehat{\mathcal{M}}_V(\Omega,\mu_1,\dots,\mu_n) \cap \widehat{\mathcal{M}}_V(\Xi_{(\underline{p}_{i,j,k},\overline{p}_{i,j,k})},\mu_1,\dots,\mu_n)=\emptyset.
$$
Moreover, by Theorem~\ref{thm_1}~(b), there exists some $\overline{\Q}\in \widehat{\mathcal{M}}_V(\Xi_{(\underline{p}_{i,j,k},\overline{p}_{i,j,k})},\mu_1,\dots,\mu_n)$. Therefore, as ${\mathcal{M}}_V(\Xi_{(\underline{p}_{i,j,k},\overline{p}_{i,j,k})},\mu_1,\dots,\mu_n) \subset {\mathcal{M}}_V(\Omega,\mu_1,\dots,\mu_n)$, we have for all $\Q \in \widehat{\mathcal{M}}_V(\Omega,\mu_1,\dots,\mu_n)$ that
\begin{align*}
&\operatorname{P}_{\Xi_{(\underline{p}_{i,j,k},\overline{p}_{i,j,k})}}(\Phi)= \int \Phi \D \overline{\Q}<\int \Phi \D \Q = \operatorname{P}_\Omega(\Phi).
\end{align*}
On the other hand, if neither \eqref{eq_thm_condition1} nor \eqref{eq_thm_condition2} hold true, then there exists some measure $\Q \in \widehat{\mathcal{M}}_V(\Omega,\mu_1,\dots,\mu_n)$ such that 
\[
\underline{p}_{i,j,k} \leq \E_\Q\left[v_{j,k}(S_{t_{j}})~\middle|~\mathcal{F}_{t_{i}}\right] \leq \overline{p}_{i,j,k}~\Q\text{-a.s.} \text{ for all } i,j,k.
\]
Hence $\Q \in {\mathcal{M}}_V(\Xi_{(\underline{p}_{i,j,k},\overline{p}_{i,j,k})},\mu_1,\dots,\mu_n)$ and consequently
\[
\operatorname{P}_{\Xi_{(\underline{p}_{i,j,k},\overline{p}_{i,j,k})}}(\Phi)\geq  \int \Phi \D {\Q}= \operatorname{P}_\Omega(\Phi),
\]
which in turn implies equality.
\\
For the assertion from (b) one can show analogously that if \eqref{eq_thm_condition3} holds, then 
\[
\widehat{\mathcal{M}}_V(\Xi_{(\underline{p}_{i,j,k},\overline{p}_{i,j,k})},\mu_1,\dots,\mu_n)  \cap\widehat{\mathcal{M}}_V(\Xi_{(\underline{p}_{i,j,k}+\varepsilon,\overline{p}_{i,j,k})},\mu_1,\dots,\mu_n) = \emptyset
\]
and conclude that for all $\Q \in \widehat{\mathcal{M}}_V(\Xi_{(\underline{p}_{i,j,k},\overline{p}_{i,j,k})},\mu_1,\dots,\mu_n)$
\[
\operatorname{P}_{\Xi_{(\underline{p}_{i,j,k}+\varepsilon,\overline{p}_{i,j,k})}}(\Phi)< \int \Phi \D \Q(s,p) = \operatorname{P}_{\Xi_{(\underline{p}_{i,j,k},\overline{p}_{i,j,k})}}(\Phi).
\]
For the reverse direction we remark that if \eqref{eq_thm_condition3} does not hold, then there exists some $\Q \in \widehat{\mathcal{M}}_V(\Xi_{(\underline{p}_{i,j,k},\overline{p}_{i,j,k})},\mu_1,\dots,\mu_n)$ such that 
\begin{equation}\label{eq_proof_<1}
\underline{p}_{i,j,k} +\varepsilon \leq \E_\Q\left[v_{j,k}(S_{t_{j}})~\middle|~\mathcal{F}_{t_{i}}\right]\leq \overline{p}_{i,j,k}~\Q\text{-a.s.} \text{ for all } i,j,k.
\end{equation}
Hence 
\[
\operatorname{P}_{\Xi_{(\underline{p}_{i,j,k}+\varepsilon,\overline{p}_{i,j,k})}}(\Phi)\geq  \int \Phi \D {\Q}= \operatorname{P}_\Omega(\Phi),
\]
which in turn implies equality.
The assertion from (c) follows in the same way as in the proof of (b).
\end{proof}

\begin{proof}[Proof of Theorem~\ref{thm_infinity_N}]
The proof is analogue to the proof of Remark~\ref{rem_thm_1}~(c). Analogue to equation \eqref{eq_dual_infinity} we see that 
\begin{equation}\label{eq_proof_fijk_xigeq0}
\widetilde{\operatorname{D}}_{(\widetilde{\mathfrak{F}}_{i,j,k})}^*(\Q) = \infty \text{ if } \Q(\widetilde{\Xi}^c)>0.
\end{equation} 
Moreover, when computing the convex conjugate of $\widetilde{\operatorname{D}}_{(\widetilde{\mathfrak{F}}_{i,j,k})}(\Phi)$ we obtain for every $\Q \in \mathcal{P}(\R_+^n)$ with finite first moments satisfying $\Q(\widetilde{\Xi})=1$ that
\begin{align*}
&\widetilde{\operatorname{D}}_{(\widetilde{\mathfrak{F}}_{i,j,k})}^*(\Q)=\sup_{u_i \in C_{\operatorname{lin}}(\R_+,\R_+)} \sum_{i=1}^n\left(\int_{\widetilde{\Xi}} u_i(s_i)\D\Q(s)-\int_{\R_+} u_i(s_i)\D\mu_i(s_i) \right) \\
+ &\sup_{H_i \in C_{b}(\R_+^i)} \sum_{i=1}^{n-1}\left(\int_{\widetilde{\Xi}} H_i(s_1,\dots,s_i)(s_{i+1}-s_i)\D\Q(s)\right)\\
+&\sup_{H_{i,j,k} \in C_{b}(\R_+^i):\atop \eqref{eq_finitely_many_nonzero} \text{ holds}} \sum_{i=1}^{n-1}\sum_{j=i+1}^{n} \sum_{k \in \mathcal{I}_{\widetilde{V}}}\left(\int_{\widetilde{\Xi}} \inf_{f_{i,j,k} \in \widetilde{\mathfrak{F}}_{i,j,k}} H_{i,j,k}(s_1,\dots,s_i)(v_{j,k}(s_j)-f_{i,j,k}(s_1,\dots,s_i))\D\Q(s)\right).
\end{align*}

The first two suprema vanish if and only if $\Q$ fulfils the corresponding martingale and marginal properties, whereas, by the same minimax-argument as in the proof of Remark~\ref{rem_thm_1}~(c), the last supremum vanishes if and only if for all $i,j=1,\dots,n$ and all $v_{j,k} \in \widetilde{V} \subseteq C_{\operatorname{lin}}(\R_+,\R_+)$ there exists $f_{i,j,k} \in \widetilde{\mathfrak{F}}_{i,j,k}$ such that 
\[
\E_\Q[v_{j,k}(S_{t_j})~|~S_{t_i},\dots,S_{t_1}]=f_{i,j,k} ~\Q\text{-a.s.}
\]
This means, together with \eqref{eq_proof_fijk_xigeq0} that $\widetilde{\operatorname{D}}_{(\widetilde{\mathfrak{F}}_{i,j,k})}^*(\Q)=0$ if and only if $\Q \in \widetilde{\mathcal{Q}}_{(\widetilde{\mathfrak{F}}_{i,j,k})}$, and otherwise $\widetilde{\operatorname{D}}_{(\widetilde{\mathfrak{F}}_{i,j,k})}^*(\Q)$ becomes infinitely large. Hence we conclude the results by the biconjuagte representation for $\widetilde{\operatorname{D}}_{(\widetilde{\mathfrak{F}}_{i,j,k})}$ similar to \eqref{eq_convex_duality}.
\end{proof}

\begin{proof}[Proof of Remark~\ref{rem_examples_fijk}]
Condition \eqref{eq_linear_growth_N_inf} is fulfilled since for all $f \in\widetilde{\mathfrak{F}}_{i,j,k}$ and all $(s_1,\dots,s_i) \in \R_+^i$ the Lipschitz property ensures that
\begin{equation}\label{eq_ineq_1}
\left|f(s_1,\dots,s_i)\right|\leq |g(s_i)-g(0)|+|g(0)|\leq s_i.
\end{equation}
To see that for every compact set $\K \subset \R_+^i$ the set $\widetilde{\mathfrak{F}}_{i,j,k}$ is compact when restricted onto $\K$, pick a sequence $(f_{i,j,k}^{(N)})_{N \in \N}$ with $f_{i,j,k}^{(N)} \in \widetilde{\mathfrak{F}}_{i,j,k}$ for all $N \in \N$. Then we obtain for all $N \in \N$ a representation $f_{i,j,k}^{(N)}(s_1,\dots,s_i)=g^{(N)}(s_i)$ for some $1$-Lipschitz function $g^{(N)}$. By the $1$-Lipschitz property of $g^{(N)}$ the sequence $(g^{(N)})_{N \in \N}$ is uniformly equicontinuous and pointwise bounded according to \eqref{eq_ineq_1}.

Thus, the Arzelà–Ascoli theorem implies the existence of a uniformly convergent subsequence (labelled identically) with $g^{(N)}\rightarrow g$ for $N \rightarrow \infty$ for some function $g$. Then $g$ is $1$-Lipschitz as we have for all $x,y \in \R_+$ that $|g(x)-g(y)|=\lim_{N \rightarrow \infty}|g^{(N)}(x)-g^{(N)}(y)|\leq |x-y|$. Further, we have $g(0)=\lim_{N \rightarrow \infty} g^{(N)}(0)=0$. It remains to show that $g$ admits a representation of the form
\[
\E_{\Q}[v_{j,k}(S_{t_j})~|~S_{t_i}]=g~\Q\text{-a.s.}\text{ for some } \Q \in \mathcal{M}(\mu_1,\dots,\mu_n).
\]
By definition of $\widetilde{\mathfrak{F}}_{i,j,k}$, we have for all $N \in \N$ the representation
\begin{equation}\label{eq_rem_representation}
g^{(N)}=\E_{\Q^{(N)}}[v_{j,k}(S_{t_j})~|~S_{t_i}] ~\Q^{(N)}\text{-a.s. for some } \Q^{(N)}\in \mathcal{M}(\mu_1,\dots,\mu_n).
\end{equation}
Then by the weak compactness of $\mathcal{M}(\mu_1,\dots,\mu_n)$ there exists some subsequence of $\left(\Q^{(N)}\right)_{N \in \N} \subseteq \mathcal{M}(\mu_1,\dots,\mu_n)$ (denoted identically) converging weakly to some $\Q \in \mathcal{M}(\mu_1,\dots,\mu_n)$. Similar to \cite[Lemma 3.3.]{sester2020robust}  we obtain for all $\Delta \in C_b(\R_+)$ and all $i,j=1,\dots,n$ that
\begin{align}
&\int_{\R_+^n} \Delta(s_i)\lim_{N\rightarrow \infty}g^{(N)}(s_i) \D \Q(s_1,\dots,s_i) \label{eq_rem_1}\\
=&\lim_{N\rightarrow \infty}\int_{\R_+^n} \Delta(s_i)g^{(N)}(s_i) \D \Q(s_1,\dots,s_i) \label{eq_rem_after_1_added}\\
=&\lim_{N\rightarrow \infty}\int_{\R_+^n} \Delta(s_i)g^{(N)}(s_i) \D \Q^{(N)}(s_1,\dots,s_i)\label{eq_rem_2} \\
=&\lim_{N\rightarrow \infty}\int_{\R_+^n} \Delta(s_i)v_{j,k}(s_j) \D \Q^{(N)}(s_1,\dots,s_i)\label{eq_rem_3}\\
=&\int_{\R_+^n} \Delta(s_i)v_{j,k}(s_j) \D \Q(s_1,\dots,s_i)\label{eq_rem_4},
\end{align}
where the equality between \eqref{eq_rem_1} and \eqref{eq_rem_after_1_added} follows due to dominated convergence (which can be seen through the validity of \eqref{eq_linear_growth_N_inf} and by $\Q \in \mathcal{M}(\mu_1,\dots,\mu_n)$), the equality between \eqref{eq_rem_after_1_added} and \eqref{eq_rem_2} holds since $\Q^{(N)}\circ S_{t_i}^{-1}=\Q\circ S_{t_i}^{-1}$, the equality between \eqref{eq_rem_2} and \eqref{eq_rem_3} is a consequence of \eqref{eq_rem_representation}, and \eqref{eq_rem_4} follows from \cite[Lemma 2.2.]{beiglbock2013model}.
Hence, we conclude that
\[
\lim_{N \rightarrow \infty}g^{(N)}=\E_{\Q}[v_{j,k}(S_{t_j})~|~S_{t_i}]~\Q\text{-a.s.}
\]
and thus $g = \E_{\Q}[v_{j,k}(S_{t_j})~|~S_{t_i}]~{\Q}$-a.s.
\end{proof}
\vspace{0.3cm}
\section*{Acknowledgements}
Financial support of the NAP Grant \emph{Machine Learning based Algorithms in Finance and Insurance} is gratefully acknowledged. We thank two anonymous referees for extraordinary carefully reading the manuscript and for useful comments that led to an improvement of the paper.
Further we acknowledge the Singapore National Supercomputing Centre (NSCC) which provided computing power to conduct the numerical examples for the research.
\appendix

\appendix

\section{Extensions}\label{sec_extension}
In this section we discuss various extensions of the presented results in Section~\ref{sec_setup_main_results}. We extend our considerations to multiple securities, market frictions such as transaction costs and the inclusion of different kinds of (path-dependent) options for dynamic trading.

\subsection{Transaction costs}\label{sec_frictions}

Since the considered strategies include an additional dynamic trading component which may cause a significant additional amount of transaction costs, it is important to discuss how to incorporate transactions costs of the form considered for example in \cite[Section 3.1.]{cheridito2017duality}. For simplicity, we stick to the setting described in \cite{cheridito2017duality} where one considers only a finite amount of staticly traded call options instead of general European options. When considering transaction costs, the profits of the considered strategies (without pricing rules) will be reduced and change for $(s,p) \in \Xi$ to:

\begin{equation}\label{strategy_transaction_costs}
\begin{aligned}
&\sum_{i=1}^n\sum_{j=1}^{M_i} \theta_{i,j} (s_i-K_{i,j})^+-h_{i,j}(\theta_{i,j}) \\
+&\sum_{i=1}^{n-1}H_i(s_1,\dots,s_i) (s_{i+1}-s_i)-g_{i}^{\operatorname{stock}}(\Delta H_{i+1} \cdot s_i)\\
+&\sum_{i=1}^{n-1}\sum_{j=i+1}^{n}\sum_{k=1}^{N} \bigg(H_{i,j,k}(s_1,\dots,s_i)\left(v_{j,k}(s_j)-p_{i,j,k}\right)-g_{i}^{\operatorname{option}}\left(\Delta H_{i+1,j,k} \cdot p_{i,j,k}\right)\bigg)
\end{aligned}
\end{equation}
with $\theta_{i,j}$, $K_{i,j}\in \R_+$, $M_i \in \N$, $\Delta H_{i+1} = H_{i+1}-H_i$, $\Delta H_{i+1,j,k} = H_{i+1,j,k}-H_{i,j,k}$, and $h_{i,j}$, $g_{i}^{\operatorname{stock}}$, $g_{i}^{\operatorname{option}}$ real-valued functions associated to the respective trading positions. \\
In the case of proportional transaction costs one has $h_{i,j}(\theta_{i,j})=\theta_{i,j}^+ h_{i,j}^+-\theta_{i,j}^- h_{i,j}^-$ where $h_{i,j}^+, h_{i,j}^-$ denote ask and bid prices of the considered call options. Moreover, we have $g_{i}^{\operatorname{stock}}(x)=\varepsilon_i^{\operatorname{stock}} |x|$ and  $g_{i}^{\operatorname{option}}(x)=\varepsilon_i^{\operatorname{option}} |x|$ for some $\varepsilon_i^{\operatorname{stock}},\varepsilon_i^{\operatorname{option}} \geq 0$.
In this case combining our duality argument with the argumentation of \cite{cheridito2017duality}, we obtain that minimizing prices of \eqref{strategy_transaction_costs}-type super-replication strategies of $\Phi(S)$ for $\Phi \in C_{\operatorname{lin}}\left(\R_+^n,\R_+\right)$ is equivalent to
\[
\sup_{\Q \in \mathcal{M}^{\operatorname{prop}}}\int_{\Omega} \Phi(s) \D \Q(s,p),
\]
where $\mathcal{M}^{\operatorname{prop}}$ is the set of all probability measures $\Q$ on $\Omega$ with
\begin{itemize}
\item[(i)]$(1-\varepsilon_i^{\operatorname{stock}})S_{t_i} \leq \E_\Q[S_{t_{i+1}}|\mathcal{F}_{t_i}] \leq (1+\varepsilon_i^{\operatorname{stock}})S_{t_i}~\Q\text{-a.s.} $ for all $i=1,\dots,n$,
\item[(ii)] $(1-\varepsilon_i^{\operatorname{option}})\operatorname{P}_{t_i}(v_{j,k}) \leq \E_\Q[v_{j,k}(S_{t_j})|\mathcal{F}_{t_i}] \leq (1+\varepsilon_i^{\operatorname{option}})\operatorname{P}_{t_i}(v_{j,k}) ~\Q\text{-a.s.}$ for all $i,j=1,\dots,n$, $k=1,\dots.N$,
\item[(iii)] $h_{i,j}^- \leq \E_\Q[(S_{t_i}-K_{i,j})^+] \leq h_{i,j}^+~\Q\text{-a.s.}$ for all $i=1,\dots,n, j=1,\dots,M_i$,
\item[(iv)] $\Q\left(\Xi\right)=1$.
\end{itemize}
This means that on the primal side we obtain an optimization problem over a set of measures with relaxed inequality constraints which will eventually lead to higher maximal prices compared with the formulation without transaction costs.

\subsection{Multiple securities}\label{sec_multiple_securities}
The considerations from Section~\ref{sec_setup_main_results} can be extended straightforward to a high-dimensional market in which we consider $d \geq 2$ stocks, tradable at $n\in \N$ future times. In this case one considers trading strategies of the form
\begin{equation*}
\begin{aligned}
\sum_{l=1}^d\bigg(\sum_{i=1}^n u_i^l(s_i^l)+\sum_{i=1}^{n-1}H_i^l(s_1,\dots,s_i) (s_{{i+1}}^l-s_{i}^l)+\sum_{i=1}^{n-1}\sum_{j=i+1}^{n}\sum_{k=1}^{N} H_{i,j,k}^l(s_{1},\dots,s_{i})\left(v_{j,k}^l(s_{j})-p^l_{i,j,k}\right)\bigg).
\end{aligned}
\end{equation*}
for $(s_1,\dots,s_n)=(s_1^1,\dots,s_n^d)\in \R^{nd}$ and $p=(p_{1,1,1}^1,\dots,p_{n,n,N}^d) \in \R^{nNd}$. We stress that the strategies $H_i^l$ and $H_{i,j,k}^l$ are for all $l =1,\dots,d$ allowed to depend on the price paths of all of the other securities under considerations, i.e., all available information is taken into account for trading. On the primal side this corresponds to joint martingale properties of the form
\begin{align*}
&\E_\Q\left[S_{t_{i+1}}^l~\middle|~S_{t_i}^1,\dots,S_{t_i}^d,\dots,S_{t_1}^d\right]=S_{t_{i}}^l~\Q\text{-a.s.},\\
&\E_\Q\left[v_{j,k}^l(S_{t_j}^l)~\middle|~S_{t_i}^1,\dots,S_{t_i}^d,\dots,S_{t_1}^d\right]=\operatorname{P}_{t_{i}}(v_{j,k}^l)~\Q\text{-a.s.}
\end{align*}
for all $i=1,\dots,n,~j=i+1,\dots,n,~l=1,\dots,d.$

\subsection{Path-dependent traded options}\label{sec_path_dependent}
From a mathematical point of view there is no need to restrict the considerations to the case of dynamic trading in European options, i.e., to options where the associated payoff function only depends on a sole value of an underlying security. However, the assumption to allow dynamic trading over time requires from a practical point of view that the involved option is traded in a sufficiently liquid amount over time. This is very often only fulfilled for specific European options such as call and put options.
However, if the liquidity of options is ensured, it is also thinkable to allow for trading in other kind of options that are possibly depending on the whole path of an underlying security. If $v_{j,k}$ depends on the whole path until time $t_j$, then we substitute \eqref{eq_pti_martingale} 
by 
 \begin{equation}
\begin{aligned}
\operatorname{P}_{t_i}\left(v_{j,k}\right)=\E_\Q\left[v_{j,k}(S_{t_1},\dots,S_{t_j})\middle|\mathcal{F}_{t_i}\right]~\Q\text{-a.s.}
\end{aligned}
\end{equation}
and accordingly on the dual side the term expressing the dynamic position in the traded options changes to
\begin{equation}
\sum_{i=1}^{n-1}\sum_{j=i+1}^{n}\sum_{k=1}^{N} H_{i,j,k}(s_{1},\dots,s_{i})\left(v_{j,k}(s_1,\dots,s_{j})-p_{i,j,k}\right).
\end{equation}
Similarly one can include dynamically traded basket options, i.e., (possibly path-dependent) options that depend on a multitude of underlying securities.
\bibliographystyle{plain} 
\bibliography{literature}

\begin{thebibliography}{10}

\bibitem{acciaio2016model}
Beatrice Acciaio, Mathias Beiglb{\"o}ck, Friedrich Penkner, and Walter
  Schachermayer.
\newblock A model-free version of the fundamental theorem of asset pricing and
  the super-replication theorem.
\newblock {\em Mathematical Finance}, 26(2):233--251, 2016.

\bibitem{aksamit2019robust}
Anna Aksamit, Shuoqing Deng, Jan Ob{\l}{\'o}j, and Xiaolu Tan.
\newblock The robust pricing--hedging duality for {A}merican options in
  discrete time financial markets.
\newblock {\em Mathematical Finance}, 29(3):861--897, 2019.

\bibitem{alfonsi2019sampling}
Aur{\'e}lien Alfonsi, Jacopo Corbetta, and Benjamin Jourdain.
\newblock Sampling of one-dimensional probability measures in the convex order
  and computation of robust option price bounds.
\newblock {\em International Journal of Theoretical and Applied Finance},
  22(03):1950002, 2019.

\bibitem{ansari2020improved}
Jonathan Ansari, Eva L\"utkebohmert, Ariel Neufeld, and Julian Sester.
\newblock Improved robust price bounds for multi-asset derivatives under
  market-implied dependence information.
\newblock {\em Preprint}, 2021.

\bibitem{baker2012martingales}
David Baker.
\newblock Martingales with specified marginals.
\newblock {\em Theses, Universit{\'e} Pierre et Marie Curie-Paris VI}, 2012.

\bibitem{bartl2019robust}
Daniel Bartl, Patrick Cheridito, and Michael Kupper.
\newblock Robust expected utility maximization with medial limits.
\newblock {\em Journal of Mathematical Analysis and Applications},
  471(1-2):752--775, 2019.

\bibitem{bartl2020pathwise}
Daniel Bartl, Michael Kupper, and Ariel Neufeld.
\newblock Pathwise superhedging on prediction sets.
\newblock {\em Finance and Stochastics}, 24(1):215--248, 2020.

\bibitem{bartl2019duality}
Daniel Bartl, Michael Kupper, David~J Pr{\"o}mel, and Ludovic Tangpi.
\newblock Duality for pathwise superhedging in continuous time.
\newblock {\em Finance and Stochastics}, 23(3):697--728, 2019.

\bibitem{bayraktar2017arbitrage}
Erhan Bayraktar and Zhou Zhou.
\newblock On arbitrage and duality under model uncertainty and portfolio
  constraints.
\newblock {\em Mathematical Finance}, 27(4):988--1012, 2017.

\bibitem{beiglbock2013model}
Mathias Beiglb{\"o}ck, Pierre Henry-Labord{\`e}re, and Friedrich Penkner.
\newblock Model-independent bounds for option prices—a mass transport
  approach.
\newblock {\em Finance and Stochastics}, 17(3):477--501, 2013.

\bibitem{beiglbock2017complete}
Mathias Beiglb{\"o}ck, Marcel Nutz, and Nizar Touzi.
\newblock Complete duality for martingale optimal transport on the line.
\newblock {\em The Annals of Probability}, 45(5):3038--3074, 2017.

\bibitem{bertsekas2004stochastic}
Dimitri~P Bertsekas and Steven Shreve.
\newblock {\em Stochastic optimal control: the discrete-time case}.
\newblock Academic Press, Inc., 1978.

\bibitem{biagini2004super}
Sara Biagini and Marco Frittelli.
\newblock On the super replication price of unbounded claims.
\newblock {\em The Annals of Applied Probability}, 14(4):1970--1991, 2004.

\bibitem{bouchard2015arbitrage}
Bruno Bouchard and Marcel Nutz.
\newblock Arbitrage and duality in nondominated discrete-time models.
\newblock {\em The Annals of Applied Probability}, 25(2):823--859, 2015.

\bibitem{breeden1978prices}
Douglas~T Breeden and Robert~H Litzenberger.
\newblock Prices of state-contingent claims implicit in option prices.
\newblock {\em Journal of business}, pages 621--651, 1978.

\bibitem{BurzoniFritelliMaggisAAP}
Matteo Burzoni, Marco Frittelli, and Marco Maggis.
\newblock Model-free superhedging duality.
\newblock {\em Ann. Appl. Probab.}, 27(3):1452--1477, 2017.

\bibitem{carr2001towards}
Peter Carr and Dilip Madan.
\newblock Towards a theory of volatility trading.
\newblock {\em Option Pricing, Interest Rates and Risk Management, Handbooks in
  Mathematical Finance}, pages 458--476, 2001.

\bibitem{cheridito2020martingale}
Patrick Cheridito, Matti Kiiski, David~J Pr{\"o}mel, and H~Mete Soner.
\newblock Martingale optimal transport duality.
\newblock {\em Mathematische Annalen}, pages 1--28, 2020.

\bibitem{cheridito2017duality}
Patrick Cheridito, Michael Kupper, and Ludovic Tangpi.
\newblock Duality formulas for robust pricing and hedging in discrete time.
\newblock {\em SIAM Journal on Financial Mathematics}, 8(1):738--765, 2017.

\bibitem{cvitanic1999closed}
Jak{\v{s}}a Cvitani{\'c}, Huyen Pham, and Nizar Touzi.
\newblock A closed-form solution to the problem of super-replication under
  transaction costs.
\newblock {\em Finance and {S}tochastics}, 3(1):35--54, 1999.

\bibitem{dantzig1998linear}
George~Bernard Dantzig.
\newblock {\em Linear programming and extensions}, volume~48.
\newblock Princeton university press, 1998.

\bibitem{aquino2019bounds}
Luca De~Gennara~Aquino and Carole Bernard.
\newblock Bounds on multi-asset derivatives via neural networks.
\newblock {\em International Journal of Theoretical and Applied Finance},
  23(08):2050050, 2020.

\bibitem{dolinsky2018super}
Yan Dolinsky and Ariel Neufeld.
\newblock Super-replication in fully incomplete markets.
\newblock {\em Mathematical Finance}, 28(2):483--515, 2018.

\bibitem{dolinsky2014martingale}
Yan Dolinsky and H~Mete Soner.
\newblock Martingale optimal transport and robust hedging in continuous time.
\newblock {\em Probability Theory and Related Fields}, 160(1-2):391--427, 2014.

\bibitem{eberlein1997range}
Ernst Eberlein and Jean Jacod.
\newblock On the range of options prices.
\newblock {\em Finance and Stochastics}, 1(2):131--140, 1997.

\bibitem{eckstein2019computation}
Stephan Eckstein and Michael Kupper.
\newblock Computation of optimal transport and related hedging problems via
  penalization and neural networks.
\newblock {\em Applied Mathematics \& Optimization}, 83(2):639--667, 2021.

\bibitem{eckstein2020martingale}
Stephan Eckstein and Michael Kupper.
\newblock Martingale transport with homogeneous stock movements.
\newblock {\em Quantitative Finance}, 21(2):271--280, 2021.

\bibitem{follmer2016stochastic}
Hans F{\"o}llmer and Alexander Schied.
\newblock {\em Stochastic finance: an introduction in discrete time}.
\newblock Walter de Gruyter, 2016.

\bibitem{frey1999bounds}
R{\"u}diger Frey and Carlos~A Sin.
\newblock Bounds on {E}uropean option prices under stochastic volatility.
\newblock {\em Mathematical Finance}, 9(2):97--116, 1999.

\bibitem{guo2019computational}
Gaoyue Guo and Jan Ob{\l}{\'o}j.
\newblock Computational methods for martingale optimal transport problems.
\newblock {\em The Annals of Applied Probability}, 29(6):3311--3347, 2019.

\bibitem{henry2013automated}
Pierre Henry-Labord{\`e}re.
\newblock Automated option pricing: Numerical methods.
\newblock {\em International Journal of Theoretical and Applied Finance},
  16(08):1350042, 2013.

\bibitem{henry2019martingale}
Pierre Henry-Labordere.
\newblock (martingale) optimal transport and anomaly detection with neural
  networks: A primal-dual algorithm.
\newblock {\em Available at SSRN 3370910}, 2019.

\bibitem{hou2018robust}
Zhaoxu Hou and Jan Ob{\l}{\'o}j.
\newblock Robust pricing--hedging dualities in continuous time.
\newblock {\em Finance and Stochastics}, 22(3):511--567, 2018.

\bibitem{kellerer1972markov}
Hans~G Kellerer.
\newblock Markov-{K}omposition und eine {A}nwendung auf {M}artingale.
\newblock {\em Mathematische Annalen}, 198(3):99--122, 1972.

\bibitem{levental1997possibility}
Shlomo Levental and Anatolii~V Skorohod.
\newblock On the possibility of hedging options in the presence of transaction
  costs.
\newblock {\em The Annals of Applied Probability}, 7(2):410--443, 1997.

\bibitem{lutkebohmert2019tightening}
Eva L{\"u}tkebohmert and Julian Sester.
\newblock Tightening robust price bounds for exotic derivatives.
\newblock {\em Quantitative Finance}, 19(11):1797--1815, 2019.

\bibitem{mykland2003financial}
Per~Aslak Mykland.
\newblock Financial options and statistical prediction intervals.
\newblock {\em The Annals of Statistics}, 31(5):1413--1438, 2003.

\bibitem{neufeld2018buy}
Ariel Neufeld.
\newblock Buy-and-hold property for fully incomplete markets when
  super-replicating markovian claims.
\newblock {\em International Journal of Theoretical and Applied Finance},
  21(08):1850051, 2018.

\bibitem{neufeld2020model}
Ariel Neufeld, Antonis Papapantoleon, and Qikun Xiang.
\newblock Model-free bounds for multi-asset options using option-implied
  information and their exact computation.
\newblock {\em arXiv preprint arXiv:2006.14288}, 2020.

\bibitem{sester2020robust}
Julian Sester.
\newblock Robust price bounds for derivative prices in markovian models.
\newblock {\em International Journal of Theoretical and Applied Finance},
  23(3):2050015, 2020.

\bibitem{sion1958general}
Maurice Sion.
\newblock On general minimax theorems.
\newblock {\em Pacific Journal of mathematics}, 8(1):171--176, 1958.

\bibitem{strassen1965existence}
Volker Strassen.
\newblock The existence of probability measures with given marginals.
\newblock {\em The Annals of Mathematical Statistics}, 36(2):423--439, 1965.

\end{thebibliography}
\end{document}